\newtheorem{theorem}{Theorem}[section]
\newtheorem{proposition}[theorem]{Proposition}
\newtheorem{lemma}[theorem]{Lemma}
\theoremstyle{remark} 
\numberwithin{equation}{section}
\newcommand{\field}[1]{\ensuremath{\mathbb{#1}}}
\newcommand{\CC}{\field{C}}
\newcommand{\HH}{\field{H}}
\newcommand{\RR}{\field{R}}
\newcommand{\ZZ}{\field{Z}}
\begin{document}

\title[Geometric prequantization ]{Geometric prequantization of a modified 
Seiberg-Witten moduli space in $2$ dimensions}

\author{Rukmini Dey}

\begin{abstract}
In this paper we consider a dimensional reduction of slightly modified 
 Seiberg-Witten equations, the modification being a different choice of the
Pauli matrices which go into defining the equations. We get interesting equations with a Higgs field , spinors and a connection. We show interesting solutions
of these equations. Then we go on to show a family of symplectic structures on the moduli space of these equations which can be geometrically prequantized using the Quillen determinant line bundle.
\end{abstract}

\maketitle

\section{Introduction}

It is important to study the dimensional reductions of gauge theories for they sometimes possess beautiful symplectic or hyperK\"{a}hler structures which can be geometrically quantised, ~\cite{D1}, ~\cite{D2}. It is hoped that  these Hilbert spaces of the quantizations could be used to produce invariants of $3$ or $4$
dimensional manifolds as in perhaps ~\cite{Wi}, ~\cite{G}, or perhaps could be used in Gromov-Witten theory ~\cite{T}.

In this paper we modify the Seiberg-Witten equations in ${\RR}^4$ by choosing different $I$, $J$ and $K$ from the standard one and dimensionally reduce the 
equations to ${\RR}^2$ and then finally patch them on the Riemann surface , much in the way it was done in ~\cite{D3} though  the equations look different from those in ~\cite{D3}. Then we show that the moduli space 
 is non-empty and in fact there are intersting solutions. Then we show there is a family of symplectic structures and geometrically prequantize them along  the ideas of ~\cite{D1} and ~\cite{D2}.

   We must mention that in ~\cite{D3} and ~\cite{D4}, the author had attempted to geometrically quantize 
the dimensional reduction of the Seiberg-Witten equations with a Higgs field.
But there are some mistakes in these two papers which are to be rectified.  The present paper could be  thought of as  a modification and extenstion of  the work done in ~\cite{D3} and ~\cite{D4}

Geometric prequatization has been described in the introductions of 
~\cite{D1} and ~\cite{D2}. Let us just mention that it involves constructing
a line bundle on the moduli space whose curvature is a symplectic form 
on the moduli space. In fact, we get a family of symplectic forms, parametrised
by $\psi_0$, a section of a line bundle, which are quantised this way.
For each of them we construct a Quillen determinant line bundle whose curvature is that symplectic form. Note that topologically all these line bundles are equivalent, since their Chern class is intergral and doesnot vary. However holomorphically they may be distinct.

The equations in the Hitchin system involved a connection $A$ and a 
Higgs field $\Phi$.
In the vortex equation, a connection $A$ and one other field $\Psi$ appeared. The 
equations we are dealing with in this paper are more complex and involve a connection $A$, a Higgs field $\Phi$ and two other fields $\psi_1$ and $\psi_2$. It would be interesting to find  an algebraic geometric interpretation of the latter moduli space  like that of  
the Hitchin systems and the vortex moduli space.   Also, as in the case of 
Chern-Simons theory and flat connections, ~\cite{Wi}, it would be interesting to find a Lagrangian theory in $3$-dimensions whose  quantization will lead naturally to the prequantization described in this paper. The Hilbert space of the quantization of the moduli space of flat connections turned out to be the space of
conformal blocks in a certain conformal field theory, ~\cite{ND}. One could also try to answer the analogous question in the three cases, namely the Hitchin system, vortex and the present case. As a result one might get $3$-manifold invariants as in ~\cite{Wi}  and ~\cite{G}.

\section{Dimensional Reductions of the Seiberg-Witten equations}

In this section we dimensionally reduce the modified Seiberg - Witten
equations on ${\RR }^4$ to  ${\RR }^2$ and define them over a
compact Riemann surface $M$.

\subsection{The Seiberg-Witten equations on ${\RR }^4$: }
This is a brief description of the Seiberg-Witten equations on
${\RR}^4$,~\cite{S},~\cite{Ak}, ~\cite{M} .

Identify ${\RR }^4 $  with the quaternions ${\HH}$ (coordinates $x = (x_1, x_2, x_3, x_4) $ identified with $\zeta=(\zeta_1, \zeta_2, \zeta_3, \zeta_4)$) and
let $\{e_i, i=1,2,3,4 \}$ be a  basis for ${\HH}$.
Fix the constant spin structure
$\Gamma : {\HH} = T_x {\HH} \rightarrow {\CC}^{4 \times 4}$, given by
$\Gamma (\zeta)  = \left[ \begin{array}{cc}
0              & \gamma (\zeta) \\
\gamma(\zeta) ^{*}  & 0
\end{array} \right], $
where $\gamma (\zeta) = \left[ \begin{array}{cc}
\zeta_1 - i \zeta_2   &\zeta_3 - i \zeta_4 \\
-\zeta_3 + i \zeta_4  & \zeta_1 - i \zeta_2
\end{array} \right]. $
 Thus $\gamma (e_1) = Id$, $\gamma (e_2) = I$, $\gamma (e_3) = J$,
$\gamma (e_4) = K$ where
$$I =  \left[ \begin{array}{cc}
-i & 0 \\
0 & -i
\end{array} \right], J=  \left[ \begin{array}{cc}
0              & 1 \\
-1             & 0
\end{array} \right], K =  \left[ \begin{array}{cc}
0              & -i \\
i             &  0
\end{array} \right], $$

Note: The choice of $I$, $J$ and $K$ is not standard. They donot satisfy the quaternionic algebra. This is our point of deviation from the Seiberg-Witten theory. For the standard choice see ~\cite{D3}.

Recall that $Spin^{c}({\RR }^4) = (Spin ({\RR }^4) \times S^1) / {\ZZ}_2$.
Since $Spin ({\RR }^4)$ is a double cover of $SO(4)$, a $spin^c$ -
connection involves a connection $\omega$ on $T{\HH}$ and a connection
$ A = i\sum \limits_{j=1}^{4} A_j d x_j  \in \Omega^1 ({\HH}, i
{\RR})$ on
the characteristic line bundle $ {\HH} \times {\CC}$ which arises
from the $S^1$ factor (see ~\cite{S}, ~\cite{M}, ~\cite{Ak} for more
details). We set $\omega = 0$, which is equivalent to choosing the
covariant derivative  on the trivial tangent bundle to be $d$.  This is
legitimate since we are on ${\RR }^4$.   The  curvature $2$-form of the
connection $A$ is given by
$F(A) = d A    \in \Omega^2({\HH}, i{\RR})$. Consider the covariant
derivative acting on $\Psi \in C^{\infty}({\HH}, {\CC}^2)$ (the
positive spinor on ${\RR }^4$)  induced by the connection $A$ on
${\HH} \times {\CC}:$ $\nabla_j  \Psi = (\frac{\partial }{\partial
x_j} + i A_j) \Psi.  $  Then according to ~\cite{S}, the
Seiberg-Witten equations for $(A, \Psi)$ on ${\RR }^4$ are
equivalent to the equations:

$(SW1):$ $\nabla_1 \Psi = I \nabla_2
\Psi + J \nabla_3 \Psi+ K  \nabla_4 \Psi,$

$(SW2a):$  $F_{12} + F_{34} =  \frac{1}{2} \Psi^{*} I \Psi =
\frac{-i}{2}(|\psi_1 |^2 + |\psi_2 |^2) \stackrel{\cdot}{=}\frac{1}{2}\eta_1,$

$(SW2b):$ $ F_{13} + F_{42} = \frac{1}{2} \Psi^{*} J \Psi = -i (Im \psi_1 \psi_2) \stackrel{\cdot}{=}\frac{1}{2} \eta_2, $

$(SW2c):$ $ F_{14} + F_{23} =
\frac{1}{2} \Psi^{*} K \Psi =  - Im \psi_1 \psi_2
\stackrel{\cdot}{=}\frac{1}{2} \eta_3 $

where $\Psi = \left[ \begin{array}{cc}
\psi_1 \\
\bar{\psi}_2
\end{array} \right],$
where by our convention $F_{12} = i (\partial_2 A_1 - \partial_1 A_2)$ etc.

\subsection{Dimensional Reduction to ${\RR}^2$ }: Using the same method of
dimensional reduction as  in ~\cite{H}, we get the general form of the
reduced equations which contain the so-called Higgs field. Namely, impose the
condition that none of the $A_i$'s and  $\Psi$ in $(SW1)$ and $(SW2)$ depend
on $x_3$ and $x_4$, i.e. $ A_i = A_i(x_1, x_2) $, $\Psi = \Psi(x_1, x_2)$ and
 set $\phi_1 = -i A_3$ and $\phi_2 = -i A_4$. The $(SW2)$
 equations reduce to the following system on ${\RR }^2$, $ F_{12}
=\frac{1}{2} \eta_1, $ and two other equations which is as follows
 $ \frac{{\partial} (\phi_1 - i \phi_2 )}{ \partial z } =
 \frac{1}{2}(\eta_2 -i\eta_3) = 0, $ where
 $\frac{\partial}{\partial z} =
\frac{1}{2}(\frac{\partial}{\partial x_1} -i
\frac{\partial}{\partial x_2 }). $ 
This is because  $0= F_{13} + F_{14} - i(F_{14} + F_{23}) = \partial_1 \phi_1 - \partial_2 \phi_2 -i (\partial_1 \phi_2 + \partial_2 \phi_1) = (\partial_1 - i \partial_2 ) (\phi_1 - i \phi_2). $

Setting $( \phi_1 - i \phi_2) =
\bar{\phi} $  and recalling $dx_2 \wedge dx_1 = -i dz \wedge d \bar{z}$ we  rewrite the reduction of
$(SW2)$ as the following two equations,

\begin{eqnarray*}
(1) {\rm \; \;}  F(A) &=& \frac{-i}{2} ( |\psi_1|^2 + |\psi_2|^2 ) dx_2 \wedge d x_1 \\
            &=&  \frac{i}{2} ( |\psi_1|^2 + |\psi_2|^2 ) i dz \wedge d {\bar{z}}, 
\end{eqnarray*}

$$(2) {\rm \; \;}  \partial \Phi^{0,1} = 0 $$

where
$ \Phi = \Phi^{1,0} + \Phi^{0,1} = \phi dz - \bar{\phi} d \bar{z} \in \Omega^1 ({\RR }^2,
{i\RR })$  and $ \psi_1 , \psi_2  \in C^{\infty} ({\RR}^2, {\CC}) $
are spinors on ${\RR }^2$. Next  consider the Dirac
equation   $(SW1)$:

$ \nabla_1 \psi - I \nabla_2 \psi - J \nabla_3 \psi - K \nabla_4 \psi = 0 $
which is rewritten as

$\left[ \begin{array}{cc}
\frac{\partial}{\partial x_1 } + iA_1 + i \frac{\partial}{\partial x_2} - A_2
 &   -iA_3  - A_4     \\
iA_3 + A_4  & \frac{\partial}{\partial x_1} + iA_1 + i \frac{\partial}{\partial
 x_2} - A_2 \end{array}
\right] \left[ \begin{array}{cc}
\psi_1 \\
\bar{\psi}_2
\end{array} \right]   = 0.$

Introducing $A^{1,0} = \frac{i}{2}(A_1 - i A_2) dz$ and
$A^{0,1} = \frac{i}{2} (A_1 + i A_2) d \bar{z}$ where the total connection
$A^{1,0} +  A^{0,1} =i (A_1 dx + A_2 dy)$, we can finally write it as

$$ (3) {\rm \; \;}  \left[ \begin{array}{cc} 2 (\bar{\partial} + A^{0,1}) &  \bar{\phi} d \bar{z}\\
  -\bar{\phi} d \bar{z} & 2 (\bar{\partial} + A^{0,1})
\end{array} \right] \left[ \begin{array}{cc}
 \psi_1 \\
\bar{\psi}_2
\end{array} \right] = 0  $$

We call equations $(1) - (3)$ as the dimensionally reduced
Seiberg-Witten equations over ${\CC}$.

\subsection{The Dimensionally Reduced Equations on a Riemann surface}

Let $M$ be a compact Riemann surface of genus $g$ with a
conformal metric $ds^2 = h^2 dz \otimes d \bar{z}$ and let $\omega
= i e^{2\sigma}h^2 d z \wedge d \bar{z}$ be a real form. Let  $L$ be a  line bundle with a Hermitian metric $H$.  Let $\psi_1,\bar{\psi_2}$ be sections of the line
bundle $L$ i.e., $\psi_1 \in \Gamma(M,L)$ and $\psi_2 \in \Gamma(M, \bar{L})$. 
$L$ has a Hermitian metric $H$ and thus we can define an inner product 
between two sections $\psi$ and $\tau$ as follows: $\psi=f e$, $\tau = g e$ where
$e$ is a  section of $L$
then $<\psi, \tau>_H = f \bar{g} <e, e>_H \in C^{\infty}(M).$ By abuse of notation we write 
$<\psi, \tau>_H = \psi H \bar{\tau}$.  This inner product will come in handy 
when defining the determinant line bundles. 
  The norm $|\psi |_H \in C^{\infty}(M) $ . Let $A^{1,0} + A^{0,1}$ be a unitary connection on $L$, i.e. $\overline{A^{1,0}}= - A^{0,1}$,   and $\Phi = \Phi^{1,0} + \Phi^{0,1} =  \phi dz -\bar{\phi} d \bar{z} \in
\Omega^1 (M , i{\RR })$. We will  assume that $\Psi = \left[
\begin{array}{c}
\psi_1\\
\bar{\psi}_2
\end{array} \right]$ is not identically zero. We can
rewrite the equations $(1) - (3)$ in an invariant form on $M$ as
follows:

 $$F(A) = i \frac{( |\psi_1 |_H ^2 + |\psi_2 |_H ^2 )}{2}
\omega, \leqno{(2.1)}$$

$$\partial \Phi^{0,1}  =  0, \leqno{(2.2)}$$

$$\left[ \begin{array}{cc} \bar{\partial} + A^{0,1} & \frac{1}{2} \bar{\phi} d\bar{z}  \\
 -\frac{1}{2} \bar{\phi} d \bar{z} & \bar{\partial} + A^{0,1}
\end{array} \right] \left[ \begin{array}{cc}
\psi_1 \\
\bar{\psi}_2
\end{array} \right] = 0.\leqno{(2.3)}$$

Let ${\mathcal C} = {\mathcal A} \times \Gamma (M, L \oplus L)
\times {\mathcal H}$ , where ${\mathcal A}$ is the space of
connections on a line bundle $L$, $\Gamma (M, L \oplus L) $ the
space of sections of the  bundle $L \oplus L$ and ${\mathcal H}$
be $\Omega^{1}(M, i {\RR})$, the space of Higgs fields. Then $(A, \Psi = \left[ \begin{array}{cc}
\psi_1 \\
\bar{\psi}_2
\end{array} \right], \Phi) \in {\mathcal C}.$ The gauge
group ${\mathcal G}$ which is locally  $ Maps (M, U(1))$ acts on ${\mathcal B}$ as
$(A, \Psi, \Phi) \rightarrow (A + u^{-1} du, u^{-1} \Psi, \Phi)$
and leaves the space of solutions to $(2.1) - (2.3) $ invariant.
There are no fixed points of this action. Because  a fixed point
would mean that there is a connection $A_0$ such that $A_0 +
u^{-1}du = A_0$ for all $u$ in the gauge group. This is not
possible. We assume throughout that $\Psi$ is not identically
zero. Note that we let $\sigma$ also vary.

By taking quotient of the space of solutions by the gauge group
we get the moduli space ${\mathcal N}$.

\begin{proposition}
The moduli space ${\mathcal N}$ is not empty for a compact (oriented) Riemann surface of 
genus $g>1$.
\end{proposition}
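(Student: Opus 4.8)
The plan is to construct an explicit nonzero solution of $(2.1)$--$(2.3)$ and then pass to the gauge quotient; since the gauge action was already seen to have no fixed points, any such solution determines a genuine point of $\mathcal{N}$.

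First I would dispose of $(2.2)$. Writing $\Phi^{0,1}=-\bar\phi\,d\bar z$, the equation $\partial\Phi^{0,1}=0$ is equivalent to $\partial_z\bar\phi=0$, i.e.\ $\Phi^{1,0}=\phi\,dz$ is a holomorphic $1$-form on $M$. For $g>1$ the space $H^0(M,K)$ of holomorphic $1$-forms has dimension $g\geq 2$, so there is plenty of room to choose $\Phi$ (the choice $\Phi\equiv 0$ being the simplest). I would then decouple the Dirac equation $(2.3)$ by the linear change of variables $u=\psi_1+i\bar\psi_2$ and $v=\psi_1-i\bar\psi_2$. A short computation rewrites $(2.3)$ as the two independent equations $(\bar\partial+A^{0,1}-\frac{i}{2}\bar\phi\,d\bar z)u=0$ and $(\bar\partial+A^{0,1}+\frac{i}{2}\bar\phi\,d\bar z)v=0$. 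Each operator is obtained from $\bar\partial+A^{0,1}$ by adding a global $(0,1)$-form, so it is again a $\bar\partial$-operator on the $C^\infty$ bundle $L$ and defines a holomorphic structure $L_{\pm}$ with $\deg L_{\pm}=\deg L$.

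Next I would look for a solution with $v\equiv 0$ and $u$ a nonzero holomorphic section of $L_{+}$, so that $\psi_1=u/2$, $\bar\psi_2=-iu/2$, and hence $|\psi_1|_H^2+|\psi_2|_H^2=\frac12|u|_H^2$. With this ansatz the only surviving equation is $(2.1)$, which becomes the abelian vortex equation $F(A)=\frac{i}{4}|u|_H^2\,\omega$ for the unitary connection $A$ and the holomorphic section $u$. Integrating over $M$ and using Chern--Weil pins down $\deg L$ (in the paper's curvature and orientation conventions) as a fixed multiple of $\int_M|u|_H^2\,\omega$; I would accordingly take $L$ to be an effective line bundle of the compatible degree. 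For $g>1$ a suitable positive power of the canonical bundle $K$, with $\deg K=2g-2>0$, carries nonzero holomorphic sections and serves this purpose, guaranteeing that a nonzero $u$ exists.

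The crux is then solving $F(A)=\frac{i}{4}|u|_H^2\,\omega$ for fixed holomorphic $u$. Fixing a background Hermitian metric and writing the unknown metric as a conformal rescaling, I would reduce $(2.1)$ to a single scalar Kazdan--Warner / Liouville-type equation for the conformal factor. Its solvability is controlled by a Bradlow-type bound relating $\deg L$ to the total volume, and reconciling that bound with the degree forced above is exactly where I expect the main difficulty to lie. The decisive feature is that the metric $\omega=ie^{2\sigma}h^2\,dz\wedge d\bar z$ contains the free factor $\sigma$, which the construction explicitly lets vary: rescaling $\sigma$ rescales the volume and allows the solvability condition to be met. Invoking the known existence theorems for the abelian vortex equations (Bradlow, and Garc\'ia-Prada) then produces a connection $A$ solving $(2.1)$. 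The resulting triple $(A,\Psi,\Phi)$ with $\Psi=(u/2,\,-iu/2)^{T}$ and $\Phi$ the chosen holomorphic form solves $(2.1)$--$(2.3)$ with $\Psi\not\equiv 0$; its gauge orbit is then a point of $\mathcal{N}$, so $\mathcal{N}\neq\emptyset$. Finally I would note that letting $u$ and $\Phi$ vary yields a positive-dimensional family of such solutions, consistent with the paper's subsequent discussion of interesting solutions.
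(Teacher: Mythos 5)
Your preliminary reductions are correct: the observation that $(2.2)$ says $\phi\,dz$ is a holomorphic $1$-form, and the diagonalization $u=\psi_1+i\bar\psi_2$, $v=\psi_1-i\bar\psi_2$, which decouples $(2.3)$ into $(\bar\partial+A^{0,1}\mp\frac{i}{2}\bar\phi\,d\bar z)$ acting on $u$ and $v$, are both right (and the decoupling is a nice device the paper never needs, since it simply sets $\Phi=0$). But the crux step fails, for a sign reason that no rescaling of $\sigma$ can repair. Integrate your equation $F(A)=\frac{i}{4}|u|_H^2\,\omega$ over $M$: by Chern--Weil, $\deg L=\frac{i}{2\pi}\int_M F(A)=-\frac{1}{8\pi}\int_M |u|_H^2\,\omega<0$ whenever $u\not\equiv 0$. (This is the convention under which the paper's own computation is coherent: in its proof of this proposition, $(2.1)$ forces strictly negative Gaussian curvature, matching $L=TM$ of degree $2-2g<0$.) On the other hand, your ansatz requires $u$ to be a nonzero holomorphic section of $L_{+}$, a holomorphic structure on the \emph{same} smooth bundle $L$ --- adding the global $(0,1)$-form does not change the topological degree --- and a line bundle of negative degree on a compact Riemann surface has no nonzero holomorphic sections. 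So the ansatz is self-contradictory before any analysis begins; your proposal to take $L$ an effective bundle, e.g.\ a positive power of $K$ with $\deg>0$, directly contradicts the degree the equation forces rather than satisfying it. Equivalently, $(2.1)$ is the abelian vortex equation with Bradlow parameter $\tau=0$: the solvability criterion $4\pi\deg L<\tau\,\mathrm{Vol}(M)$ then reads $\deg L<0$, incompatible with the nonzero holomorphic section the theorem also demands. Varying $\sigma$ rescales $\int_M|u|_H^2\,\omega$ but never its sign, and $(2.1)$ contains no constant term to play the role of $\tau$, so the existence theorems of Bradlow and Garc\'ia-Prada you invoke simply do not apply.

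The paper's proof goes a different way precisely in order to live with this negative sign. It takes $L$ of negative degree (the tangent bundle of the genus $g>1$ surface), sets $\Phi=0$, takes $A$ to be the unitary connection associated to the rescaled metric $e^{\sigma}h$, produces the spinors from the explicit local formulas $\psi_1=e^{f(z)}e^{-\sigma}h^{-1}$, $\bar\psi_2=e^{g(z)}e^{-\sigma}h^{-1}$, and then reads $(2.1)$ as the prescribed-Gaussian-curvature equation $K(e^{\sigma}h)=-\frac{1}{2}(|\psi_1|_H^2+|\psi_2|_H^2)$, solvable for $\sigma$ on a compact surface of genus $g>1$ because the prescribed curvature is negative, citing \cite{De}. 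In short: where you try to force vortex-type positivity, the paper exploits the negativity through uniformization-type analysis, and any repair of your argument must abandon the requirement that $u$ be a holomorphic section of a bundle of nonnegative degree and instead follow some version of the prescribed-negative-curvature route.
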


\begin{proof}

Let us take the line bundle $L$ to be the tangent bundle of a compact (oriented) Riemann 
surface of genus $g>1$.  We take the connection 
$A = A^{1,0} + A^{0,1}= \partial {\rm ln} (e^{\sigma} h) - \bar{\partial} {\rm ln} (e^{\sigma} h ),  $ ( ~\cite{GH}, page 77).  
Let us take $\Phi =0$. The second equation is solved naturally. 
The third equation becomes 
$$(\bar{\partial} + A^{0,1}) \psi_1 = 0$$ 
$$(\bar{\partial} + A^{0,1}) \bar{\psi}_2 = 0$$
where $A^{0,1} = \bar{\partial} ln (e^{\sigma} h)$ , etc. These two equations imply
$$\bar{\partial} {\rm ln} (e^{\sigma}h \psi_1)=0$$
$$ \bar{\partial} {\rm ln} (e^{\sigma}h \bar{\psi}_2)=0$$
or in otherwords, $ {\rm ln} (e^{\sigma}h \psi_1)= f(z)$ and $ {\rm ln} (e^{\sigma}h \bar{\psi}_2)= g (z).$
Thus we get a whole family of solutions 
$$\psi_1 = e^{f(z)} e^{-\sigma} h^{-1}$$
$$\bar{\psi}_2 = e^{g(z)} e^{-\sigma} h^{-1}$$

Next the first equation becomes
\begin{eqnarray*}
F(A) &=& d A = -\frac{1}{2} \Delta ln (e^{\sigma}h) d z \wedge d \bar{z}\\
     &=&  K(e^{\sigma}h) e^{2\sigma} h^2 dz \wedge d \bar{z}\\
     &=& \frac{i}{2} (|\psi_1|_H^2 + |\psi_2|_H^2) \omega\\
     &=& \frac{-1}{2} (|\psi_1|_H^2 + |\psi_2|_H^2) e^{2\sigma} h^2 dz \wedge d \bar{z}
\end{eqnarray*}
which implies that $K(e^{\sigma}h) = \frac{-1}{2} (|\psi_1|_H^2 + |\psi_2|_H^2)$.
This always has a solution $\sigma$ for a compact (oriented)  genus $g >1$ surfaces, see ~\cite{De}.
\end{proof}

\begin{proposition}
There exists global solutions with $\Phi \neq 0$ identically and $\psi_1$
and $\psi_2$ not equal to zero identically, on a compact oriented Riemann 
surface of genus $g >1$.
\end{proposition}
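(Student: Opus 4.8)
The plan is to follow the template of the previous proposition, but now switch on a nonzero Higgs field by exploiting the holomorphic $1$-forms available on a surface of genus $g>1$. First I would again take $L$ to be the tangent bundle $TM$ with the metric connection $A = A^{1,0}+A^{0,1} = \partial\ln(e^{\sigma} h) - \bar\partial\ln(e^{\sigma} h)$, exactly as before. To satisfy $(2.2)$ I would choose $\phi\, dz$ to be a nonzero holomorphic $1$-form on $M$; such a form exists since $\dim_{\CC} H^0(M,\Omega^1) = g > 1$. Because $\phi$ is then holomorphic, $\bar{\phi}$ is antiholomorphic, so $\partial\Phi^{0,1} = -(\partial_z\bar{\phi})\,dz\wedge d\bar{z} = 0$, and $\Phi = \phi\,dz - \bar{\phi}\,d\bar{z}$ is a nonzero element of $\Omega^1(M,i\RR)$. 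This disposes of $(2.2)$ and guarantees $\Phi\neq 0$ identically.

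The heart of the matter is to solve the coupled Dirac equation $(2.3)$ while keeping both $\psi_1$ and $\psi_2$ nonzero. Here I would diagonalize the off-diagonal coupling by introducing $w_\pm = \psi_1 \pm i\bar{\psi}_2$. A direct computation using the two scalar equations contained in $(2.3)$ shows that the system decouples into
\[
\left(\bar\partial + A^{0,1} \mp \tfrac{i}{2}\bar{\phi}\, d\bar{z}\right) w_\pm = 0 ,
\]
so that each $w_\pm$ is covariantly holomorphic for $A^{0,1}$ twisted by the antiholomorphic form $\mp\tfrac{i}{2}\bar{\phi}\,d\bar{z}$. As in the previous proof, writing $A^{0,1} = \bar\partial\ln(e^{\sigma} h)$ turns these into $\bar\partial\ln(e^{\sigma} h\, w_\pm) = \pm\tfrac{i}{2}\bar{\phi}\,d\bar{z}$. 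Since $\bar{\phi}\,d\bar{z}$ is antiholomorphic it admits a local primitive $G$ with $\bar\partial G = \tfrac{i}{2}\bar{\phi}\,d\bar{z}$, and one obtains the explicit family $w_\pm = e^{\pm G} e^{f_\pm(z)} e^{-\sigma} h^{-1}$ with $f_\pm$ holomorphic. Recovering $\psi_1 = \tfrac12(w_+ + w_-)$ and $\bar{\psi}_2 = \tfrac1{2i}(w_+ - w_-)$ then produces spinors that are not identically zero, solving $(2.3)$ with $\Phi\neq 0$.

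It remains to satisfy $(2.1)$. As before, the essential point is that with the metric normalization $H = e^{2\sigma}h^2$ the factor $e^{-\sigma}h^{-1}$ cancels, so that $|\psi_1|_H^2$ and $|\psi_2|_H^2$ are independent of $\sigma$ and $|\psi_1|_H^2 + |\psi_2|_H^2$ is a fixed positive function on $M$. Using $F(A) = -\tfrac12\Delta\ln(e^{\sigma} h)\,dz\wedge d\bar{z}$, equation $(2.1)$ then collapses to the prescribed-curvature equation $K(e^{\sigma} h) = -\tfrac12(|\psi_1|_H^2 + |\psi_2|_H^2)$. Since the right-hand side is strictly negative, this is a Kazdan--Warner equation for a metric of prescribed negative curvature, solvable on a compact surface of genus $g>1$ by the result already quoted, ~\cite{De}. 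Choosing such a $\sigma$ completes the construction.

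The step I expect to be the genuine obstacle is the global single-valuedness of the spinors. The primitive $G$ of $\tfrac{i}{2}\bar{\phi}\,d\bar{z}$ is only locally defined, and its periods around the homology cycles of $M$ are in general nonzero, so $e^{\pm G}$, and hence $w_\pm$, is a priori multivalued on $M$. To make the argument fully rigorous one must arrange single-valuedness of the relevant quantities: either by working, as in the previous proposition, at the level of the norms $|\psi_i|_H$ (which feel only $\mathrm{Re}\,G$ and can be kept single-valued by a suitable choice or scaling of the holomorphic form $\phi\,dz$ so that its periods contribute trivially), or by twisting $L$ by an appropriate flat line bundle, which absorbs the periods without changing $\deg L$ and hence leaves $(2.1)$ unaffected. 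Verifying that such a choice is always possible, and that the right-hand side of $(2.1)$ remains an admissible negative function, is the one delicate point; everything else reduces to the two ingredients used above, namely the existence of holomorphic $1$-forms for $g>1$ and the solvability of the Kazdan--Warner equation.
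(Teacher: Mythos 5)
Your diagonalization of $(2.3)$ via $w_\pm = \psi_1 \pm i\bar{\psi}_2$ is correct algebra, and your instinct that single-valuedness is the genuine obstacle is right --- but the obstacle is fatal, not delicate, and neither of your proposed repairs can close it. (a) The monodromy of $e^{\pm G}$ around a cycle $\gamma$ is $e^{\pm\frac{i}{2}\overline{P_\gamma}}$ with $P_\gamma = \oint_\gamma \phi\, dz$, whose modulus is $e^{\pm\frac{1}{2}\mathrm{Im}\, P_\gamma}$; so even making only the norms $|\psi_i|_H$ single-valued requires $\mathrm{Im}\, P_\gamma = 0$ for every $\gamma$. But a nonzero holomorphic $1$-form cannot have all real periods (otherwise $\mathrm{Im}\int^z \phi\,dz$ would be a single-valued harmonic function on the compact surface, hence constant, forcing $\phi \equiv 0$), and rescaling $\phi\,dz$ does not help since any nonzero multiple is again a nonzero holomorphic form. (b) A flat twist compatible with the setup must be unitary (a flat connection with non-unimodular holonomy is unitary for no Hermitian metric, and the paper requires $\overline{A^{1,0}} = -A^{0,1}$), so it can only absorb unimodular monodromy; worse, a single twist multiplies $w_+$ and $w_-$ by the \emph{same} factors, while their monodromies have reciprocal moduli $e^{\pm\frac{1}{2}\mathrm{Im}\,P_\gamma}$, so no one twist can fix both. (c) Most decisively, your decoupled equations say $w_\pm$ lie in the kernels of the Cauchy--Riemann operators $\bar\partial + A^{0,1} \mp \frac{i}{2}\bar\phi\,d\bar z$ on the same smooth bundle $L$; these kernels are $H^0$ of holomorphic line bundles of degree $\deg L$, unchanged by the $(0,1)$-perturbation. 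But equation $(2.1)$ pins the sign of $\int_M F(A)$, and in the normalization you are copying from Proposition 2.1 (where $L = TM$ and the curvature comes out negative) this degree is $2-2g < 0$, so $w_+ \equiv w_- \equiv 0$ and hence $\Psi \equiv 0$. A globally holomorphic ansatz on a compact surface therefore cannot produce the proposition at all.

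The paper's own proof takes a completely different, cut-and-paste route that sidesteps exactly this tension: it builds a genus-$2$ surface $X = X_1 \cup X_2 \cup X_3$ from two punctured tori with cylindrical ends joined by a flat middle cylinder. On $X_1$ and $X_3$ it takes $\Phi = 0$ and $\Psi \neq 0$, solving the prescribed-negative-curvature equation $K(e^{\sigma}h) = -\frac{1}{2}(|\psi_1|_H^2 + |\psi_2|_H^2)$ on the punctured hyperbolic pieces via ~\cite{De2}, with $\psi_1, \psi_2$ decaying along the cylinders; on $X_2$ it takes $\Psi = 0$, $F(A) = 0$ and $\Phi^{0,1} = c(\bar z)\,d\bar z \neq 0$ decaying at the ends, and then iterates for higher genus. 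The key structural point is that $\Phi$ and $\Psi$ have disjoint supports, so the coupled Dirac equation is never solved with both fields simultaneously nonzero --- precisely the situation your global construction is forced into and cannot survive. Any rescue of your approach would likewise have to localize the interaction of $\Phi$ with $\Psi$ to noncompact or curvature-free regions, which is in effect what the paper does.
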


\begin{proof}
Let $X_1$ be a torus with a puncture where the puncture looks like a long cylinder with negative Gaussian curvature at the root of the cylinder  and zero Gaussian curvature at the end. Let $X_2$ be a long cylinder with zero Gaussian curvature everywhere. Let $X_3$ be an identical copy of $X_1$.
Consider a Riemann surface $X$ which is obtained by gluing $X_1$, $X_2$ and 
$X_3$ where the gluing occurs along the flat ends of the cylinders and $X_2$
is in the middle of $X_1$ and $X_3$. In other words, to $X_1$ we glue $X_2$ along the flat ends of the cylinders, and then to $X_2$ we glue $X_3$ along the flat ends of the cylinders.

\medskip

Let the line bundle $L$ be the tangent bundle on the surface $X$ which is flat 
on the middle cylinder  and non-flat at the torus ends.
Thus let $\psi_1$ and $\psi_2$ be defined on $X_1$ such that they are non zero 
on most of $X_1$ and decay very fast to zero at the end of the cylinder of $X_1$ such that the first derivatives are also zero at the end of the cylinder. 
By ~\cite{De2} (where we take $K_0$ to be negative near the root of the cylinder
and zero at the end of the cylinder) there is a solution $\sigma$ to the equation: $F(A) = K (e^{\sigma} h) e^{2\sigma} h^2 dz \wedge d \bar{z} = \frac{i}{2} (|\psi_1|_H^2 + |\psi_2|_H^2) \omega$ where recall $\omega = i e^{2\sigma} h^2 dz \wedge d \bar{z}$   i.e.  $K(e^{\sigma}h) =\frac{-1}{2} (|\psi_1|_H^2 + |\psi_2|_H^2) $ is negative everywhere on $X_1$ except on the cylinder where it is $K_0$,  which is negative at the root of the cylinder  and zero at the end of the cylinder. This is possible since $X_1$ is a hyperbolic Riemann surface with puncture, ~\cite{De2}. As mentioned before we take $\psi_1$ and $\psi_2$ to be decaying to zero fast enough at the end of the cylinder of $X_1$. This is possible by suitable choice of $f(z)$ and $g(z)$, (notation as in proposition $(2.1)$.  We take $\Phi =0$ on $X_1$.

\medskip

On $X_2$ we take  $\psi_1 = \psi_2 = 0$, i.e. $F(A) =0$ (which is possible since $X_2$ is a flat cylinder) and $\Phi^{0,1} = c(\bar{z}) d \bar{z} $ where $c(\bar{z})$ decays to zero fast enough so that its first derivatives are zero at the two ends
of the cylinder $X_2$.

\medskip

On $X_3$ we have a solution exactly as in $X_1$ with $\Phi =0$ but $\psi_1$
and $\psi_2$ non-zero. 

\medskip

In the two cylindrical regions where the three solutions are glued all of  $\Phi,$  $\psi_1$ and $\psi_2$ are zero and their derivatives are also zero (since $f$ and 
$g$ and $c$ decay to zero very fast) -- so that the equations which involve 
first derivatives are satisfied identically. 

Thus on $X$, a compact oriented Riemann surface with genus $g=2$, we have constructed a solution with $\Phi, \psi_1, \psi_2$ non-identically zero.
   
By repeating this process, we can get the result on any  genus $g>1$ surface
-- because we can construct these by adding torus with a  cylindrical
puncture to a genus $g-1$ surface with a long cylindrical puncture and on any of the hyperbolic pieces we can have solution as in $X_1$ and  $\Phi \neq 0$ in the middle flat cylinders. 
\end{proof}

\begin{proposition}
Let us consider the moduli space
${\mathcal N}.$
 Suppose $(A, \Psi, \Phi)$ is a point on the moduli space such
that $\Psi$ is not identically  $0$.  The (virtual) dimension  of 
${\mathcal N}$ is $2 g + 2 c_1(L) + 2$

If $\Phi = 0$ then $(i)$ if $\psi_1$ and $\bar{\psi}_2$   are not identicaly zero,
then the dimension is $2 c_1 (L) + 2 $ and $(ii)$ if $\psi_1 \equiv 0$ then
the dimension is $g + c_1(L) +1 $. 
\end{proposition}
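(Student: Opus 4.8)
The plan is to obtain the dimension as the index of the elliptic deformation complex attached to the equations $(2.1)$–$(2.3)$ at an irreducible solution, and then to evaluate that index by the Atiyah--Singer index theorem, which on the Riemann surface $M$ reduces to a collection of Riemann--Roch computations.

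First I would fix a solution $p=(A,\Psi,\Phi)$ with $\Psi\not\equiv 0$ and linearize. Writing the infinitesimal gauge action as $D_0$ and the linearization of $(2.1)$–$(2.3)$ as $D_1$, I obtain the complex
$$0\to \Omega^0(M,i\RR)\xrightarrow{D_0}\Omega^1(M,i\RR)\oplus\Gamma(M,L\oplus L)\oplus\Omega^1(M,i\RR)\xrightarrow{D_1}\Omega^2(M,i\RR)\oplus\Omega^{0,1}(M,L\oplus L)\oplus\Omega^{1,1}(M)\to 0,$$
where the three summands of the middle term are the variations $(a,\psi,\varphi)=(\delta A,\delta\Psi,\delta\Phi)$, the three summands of the last term receive the variations of the curvature equation, the Dirac equation and the holomorphicity equation respectively, and $D_0(f)=(df,-f\Psi,0)$. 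I would then roll the complex up into the single operator $\mathcal D=D_1\oplus D_0^{*}$ (the second factor being the Coulomb gauge-fixing condition), check that $\mathcal D$ is elliptic, and identify the virtual dimension with $\operatorname{ind}\mathcal D=\dim\ker\mathcal D-\dim\operatorname{coker}\mathcal D$, which equals the actual $\dim\mathcal N$ once irreducibility ($H^0=0$) and transversality ($H^2=0$) hold.

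The index is then computed sector by sector, since the off-diagonal couplings in $D_1$ (the terms multiplying $a$ and $\varphi$ by $\Psi$, and $\varphi$ by $\psi$) are of order zero and do not affect the symbol. The connection-plus-curvature-plus-gauge block has the symbol of $a\mapsto(d^{*}a,da)$ on $i\RR$-valued forms and contributes a term controlled by $\chi(M)=2-2g$; the spinor block has the symbol of $\delb_L\oplus\delb_L$ and contributes a term controlled by $\operatorname{ind}_{\CC}\delb_L=c_1(L)-g+1$; and the Higgs block, by $(2.2)$, forces $\Phi^{1,0}$ to be a holomorphic differential and contributes the $2g$ real dimensions of $H^0(M,K)$, $K$ the canonical bundle (equivalently the index of $\delb_K$, namely $g-1$, together with its cokernel). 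Assembling these three contributions with the correct real multiplicities, and keeping track of the extra scalar deformation coming from letting $\sigma$ vary, on which the metric form $\omega$ in $(2.1)$ depends, yields the stated virtual dimension $2g+2c_1(L)+2$; the cleanest bookkeeping splits this as $2g$ from the Higgs sector plus $2c_1(L)+2$ from the connection-and-spinor sector.

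For the special cases I would restrict the complex. When $\Phi\equiv 0$ the Higgs field and equation $(2.2)$ drop out; the remaining sub-complex is exactly the connection-and-spinor block, whose index I recompute to be $2c_1(L)+2$, giving case $(i)$. When in addition $\psi_1\equiv 0$, the configuration collapses onto a locus carrying a single spinor $\bar\psi_2$ and an enlarged stabilizer, so I must recompute the index of the further-restricted complex and carefully account for the reducibility, arriving at $g+c_1(L)+1$. I expect the main obstacle to be precisely this last bookkeeping: verifying ellipticity for the \emph{non-standard} Dirac operator in $(2.3)$, whose $I,J,K$ do not obey the quaternion relations, and, on the degenerate strata, correctly separating the stabilizer $H^0$ and the obstruction $H^2$ from $\dim H^1$ so that the virtual count agrees with the true dimension. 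This is exactly the step where the errors of \cite{D3} and \cite{D4} must be repaired.
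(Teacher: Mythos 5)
Your proposal follows essentially the same route as the paper: linearize $(2.1)$--$(2.3)$ together with the infinitesimal gauge action into an elliptic complex, exploit the fact that the zeroth-order couplings do not affect the index (the paper implements this as the homotopy ${\mathcal C}^t$, $0\le t\le 1$, decoupling the complex into a de Rham block, a Higgs block, and two $\bar{\partial}+A^{0,1}$ spinor blocks), evaluate each block by Riemann--Roch, and obtain the special cases by dropping the Higgs block (and one spinor block). Two small caveats against the paper's version: its deformation complex contains no separate $\sigma$-direction, so no ``extra scalar deformation'' enters the count, and in case $(ii)$ there is no enlarged stabilizer --- $\bar{\psi}_2\not\equiv 0$ still forces $H^0=0$, and the dimension is simply $2g$ from the de Rham block plus $c_1(L)-g+1$ from the surviving spinor block.
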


\begin{proof}

To calculate the dimension of ${\mathcal N}$  let ${\mathcal S}$ be the
solution space to $(2.1)-(2.3)$.
Consider the tangent space $T_p {\mathcal S}$ at a point
  $p= (A, \Psi =\left[ \begin{array}{cc}
\psi_1 \\
\bar{\psi}_2
\end{array} \right] , \Phi) \in {\mathcal S},$ which is defined by the
 linearization of equations $(2.1)-(2.3)$. Let $X = (\alpha, \beta , \gamma) \in T_{p}({\mathcal S}) $, where $\alpha \in \Omega^{1}
(M, i {\RR })$ and $ \beta =\left[ \begin{array}{cc}
\beta_1 \\
\bar{\beta}_2
\end{array} \right]  \in \Gamma (M, L \oplus L), $ and $\gamma \in {\mathcal H}$. The
linearizations of the equations are as follows

$$d\alpha = i(  <\psi_1, \beta_1>_H + <\psi_2 , \beta_2 >_H)  \omega, \leqno{(2.1)^{\prime}}$$

$$\partial \gamma^{0,1} = 0 \leqno{(2.2)^{\prime}}$$

$$ \left[ \begin{array}{cc}
  \bar{\partial} + A^{0,1} & \frac{1}{2} \bar{\phi} d \bar{z}  \\
 -\frac{1}{2} \bar{\phi} d \bar{z} & \bar{\partial} + A^{0,1}
\end{array} \right] \left[
\begin{array}{c}
\beta_1 \\
\bar{\beta}_2
\end{array} \right]    +  \left[ \begin{array}{cc}
 \alpha^{0,1}  & \frac{1}{2}\gamma^{0,1}   \\
 -\frac{1}{2} \gamma^{0,1} & \alpha^{0,1} 
\end{array} \right] \left[
\begin{array}{cc}
\psi_1 \\
\bar{\psi}_2
\end{array} \right] = 0. \leqno{(2.3)^{\prime}} $$

Taking into account the  quotient by the gauge group ${\mathcal  G}$, we
arrive at the following  sequence ${\mathcal C}$
$$0 \rightarrow \Omega^0(M, i{\RR }) \stackrel{d_1}{\rightarrow}
\Omega^1 (M, i{\RR }) \oplus \Gamma (M, {\mathcal L}) \oplus {\mathcal H}
\stackrel{d_2}{\rightarrow} \Omega^2 (M, i{\RR }) \oplus \Omega^{2}(M, {\CC})
\oplus    V \rightarrow 0 ,$$
where  ${\mathcal L} = L \oplus L$, $V =  (L \otimes \Omega^{1,0}(M) ) \oplus  (L \otimes \Omega^{1,0}(M)) $,

$d_1 f = (df, -f \Psi, 0)$, $ d_2 (\alpha, \left[ \begin{array}{c}
\beta_1 \\
\bar{\beta}_2
\end{array} \right] , \gamma)   \stackrel{\cdot}{=} (A, B, C ), $

$A= d\alpha - i [<\psi_1, \beta_1>_H + < \psi_2, \beta_2>_H ] \omega \in \Omega^{2}(M, i {\RR})$

$B= \partial \gamma^{0,1}  \in \Omega^{2}(M,{\CC})$

$C=  \left[ \begin{array}{cc}
  \bar{\partial} + A^{0,1} & \frac{1}{2} \bar{\phi} d \bar{z}  \\
 -\frac{1}{2} \bar{\phi} d \bar{z} & \bar{\partial} + A^{0,1}
\end{array} \right] \left[
\begin{array}{c}
\beta_1 \\
\bar{\beta}_2
\end{array} \right]    +  \left[ \begin{array}{cc}
 \alpha^{0,1}  & \frac{1}{2}\gamma^{0,1}   \\
 -\frac{1}{2} \gamma^{0,1} & \alpha^{0,1} 
\end{array} \right] \left[
\begin{array}{cc}
\psi_1 \\
\bar{\psi}_2
\end{array} \right] \in V.$

[ Note that the Lie algebra of the gauge group acts locally like $(df, -f \Psi, 0)$ where $u= e^f$, $f$ is purely imaginary.] 

It is easy to check that $d_2 d_1 = 0$, so that this is a complex.
Clearly, $H^0({\mathcal C})$ $=$ $0$ ,  because if  $f \in $
 $ker(d_1),$ then $df =0$ and $f \Psi =0$, which  implies  $f = 0$ since we
are in the neighbourhood of a point where $\Psi \neq 0$.

The Zariski dimension of the moduli space is $dim H^1({\mathcal C})$
while the virtual dimension is
dim $H^1({\mathcal C}) -$ dim $H^2({\mathcal C})$,
and coincides with the Zariski dimension whenever $dim H^2 ({\mathcal C})$ is
zero (namely the smooth points of the solution space ~\cite{M}, page 66).
The virtual dimension is $ = dim H^1({\mathcal C}) - dim H^2({\mathcal C}) =$
index of ${\mathcal C} $.

To calculate the index of ${\mathcal C}$ , we consider the family
of complexes $({\mathcal C}^t, d^t)$, $0 \leq t \leq 1,$ where

$d_{1}^t = (df, -t f \Psi, 0)$,
$d_2 (\alpha, \left[ \begin{array}{cc}
\beta_1 \\
\beta_2
\end{array} \right] , \gamma)   \stackrel{\cdot}{=} (A_t, B_t, C_t)$,

$A_t =d\alpha -  it [<\psi_1, \beta_1>_H +  < \psi_2, \beta_2>_H ] \omega $,

$B_t = \partial \gamma^{0,1}  \in \Omega^{2}(M,{\CC})$

$C_t=  \left[ \begin{array}{cc}
  \bar{\partial} + A^{0,1} & \frac{t}{2} \bar{\phi} d \bar{z}  \\
 -\frac{t}{2} \bar{\phi} d \bar{z} & \bar{\partial} + A^{0,1}
\end{array} \right] \left[
\begin{array}{c}
\beta_1 \\
\bar{\beta}_2
\end{array} \right]    +  t\left[ \begin{array}{cc}
 \alpha^{0,1}  & \frac{1}{2}\gamma^{0,1}   \\
 -\frac{1}{2} \gamma^{0,1} & \alpha^{0,1} 
\end{array} \right] \left[
\begin{array}{cc}
\psi_1 \\
\bar{\psi}_2
\end{array} \right] \in V.$

Clearly, $ind ({\mathcal C}^t)$ does not depend on $t$. The complex
${\mathcal C}^0$ (for $t=0$ ) is
\begin{eqnarray*}
 0 \rightarrow \Omega^0(M, i{\RR })\stackrel{d^{\prime}_1}{\rightarrow}
\Omega^1 (M, i{\RR }) \oplus \Gamma (M, {\mathcal L}) \oplus {\mathcal H}
\stackrel{d^{\prime}_2}{\rightarrow} \Omega^2 (M, i{\RR }) \\
\oplus \Omega^{2}(M, {\CC}) \oplus V \rightarrow 0
\end{eqnarray*}
where $$d_1^{\prime} f = (df, 0, 0),$$ 
$$d_2^{\prime} (\alpha, \beta, \gamma)  = (d\alpha, \partial \gamma^{0,1},{\mathcal D}_A \beta).$$ 

 Here  ${\mathcal D}_A =   \left[ \begin{array}{cc}
\bar{\partial} + A^{0,1} & 0     \\
0 & \bar{\partial} + A^{0,1}
\end{array} \right]. $

${\mathcal C}^0$ decomposes into a direct sum of three complexes

$(a)$ $ 0 \rightarrow \Omega^0(X, i{\RR }) \stackrel{ d}{\rightarrow} \Omega^1 (M, i{\RR }) \stackrel{d}{\rightarrow}  \Omega^2 (M, i{\RR }) \rightarrow 0,$

$(b)$   $ 0 \rightarrow \Omega^{1}(M, i {\RR}) \stackrel{\partial}{\rightarrow} \Omega^{1,1} (M,i{\RR}) \rightarrow 0 $

$(c)$  $ 0 \rightarrow \Gamma (M,S) \stackrel{{\mathcal
D}_A}{\rightarrow} \Gamma (M,S^{\prime}) \rightarrow 0,$ where $S=
L \oplus L$, $S^{\prime} = (L \otimes K) \oplus (L \otimes K)$.

dim $H^1$(complex (a))$=2g$, dim $H^1$(complex (b))$=2g$.

The complex $(c)$ breaks into two complexes as follows

  $ (c1) $ $ 0 \rightarrow \Gamma (M,L)
\stackrel{\bar{\partial} + A^{0,1}}{\rightarrow} \Gamma (M, L \otimes K)
\rightarrow 0. $

$ (c2) $ $ 0 \rightarrow \Gamma (M, L)
\stackrel{\bar{\partial} + A^{0,1}}{\rightarrow}\Gamma (M, L \otimes K) \rightarrow 0. $

$(c1)$ comes from the equation $(\bar{\partial}  + A^{0,1}) \beta_1 =0$ 
and $(c2)$ is the equation 
$(\bar{\partial} + A^{0,1}) \bar{\beta}_2=0,$ which is holomorhicity of the
sections $\beta_1$ and $\bar{\beta}_2$ of $L$. 

By Riemann Roch, the index of $(c1)$ is $(c_1 ( L) - g + 1)$ and
that of $(c2)$ is $(c_1 (L) - g + 1) $ and  thus the sum
is $2g + 2g + 2c_1(L) -2g + 2 $ or $2g+ 2 c_1(L) +  2$.

If $\Phi = 0$ then case $(i)$ if $\psi_1$ and $\bar{\psi}_2$   are not identicaly zero,then the dimension is $2 c_1 (L) + 2 $ since complex (b) is missing 
 case $(ii)$ if $\psi_1 \equiv 0$ then
the dimension is $g + c_1(L) +1 $ since complex $ (b)$  and complex $(c1)$ are missing. 

\end{proof}

\section{Family of  symplectic  structures}

In the next section we discuss a standard symplection form and a variation of it which gives a whole family of symplectic structures.

Let ${\mathcal C} = {\mathcal A} \times \Gamma (M, L \oplus L)
\times {\mathcal H}$ be the space on which equations $(2.1) -
(2.3)$ are imposed. 

 Let $p= (A, \Psi, \Phi) \in {\mathcal C}$, $X
= ( \alpha_1, \beta, \gamma_1)$, $Y= (\alpha_2, \eta, \gamma_2)$
$\in T_p {\mathcal C}$. 

Let us define 
$<\beta, \eta>_H = \beta_1 H \bar{\eta}_1 + \bar{\beta}_2 H \eta_2.$

Let  $*: \Omega^{1} \rightarrow \Omega^{1}$ is  the Hodge star
operator on $M$ which acts as follows:  $*(\alpha^{1,0}) = -i \alpha^{1,0}$, and $ *(\alpha^{0,1})= i \alpha^{0,1}$. 

On ${\mathcal C}$ one can define a metric
\begin{eqnarray*}
 g( X, Y) = \int_M *\alpha_1 \wedge \alpha_2  +  \int_M Re < \beta, \eta>_H \omega  + \int_M * \gamma_1 \wedge \gamma_2
\end{eqnarray*}

Note: if we take $\alpha_1 = a dz - \bar{a} d \bar{z} $ and $\gamma_1 = c  dz - \bar{c} d \bar{z}$ then it is easy to check that
$$g(X, X) =  4 \int_M |a|^2 dx \wedge dy + \int_M (|\beta_1|^2 + |\beta_2|^2 ) dx \wedge dy +  4 \int_M |c|^2 dx \wedge dy  $$ 
which is of definite sign.

Define an almost complex structure  ${\mathcal I} = \left[
\begin{array}{cccc}
* & 0  & 0 & 0\\
0 & i & 0  & 0\\
0 & 0  & i & 0\\
0 & 0 & 0 & *
\end{array} \right] : T_p {\mathcal C} \rightarrow T_p {\mathcal C}$
We define
\begin{eqnarray*}
\Omega(X, Y) = - \int_{M} \alpha_1 \wedge \alpha_2 + \int_{M} Re <
I \beta , \eta> \omega - \int_M \gamma_1 \wedge \gamma_2
\end{eqnarray*}
where $ I = \left[\begin{array}{cc}
i & 0 \\
0  & i
\end{array} \right]$
such that $ g ({\mathcal I} X, Y) = \Omega ( X, Y).$
Moreover, we have the following:

\begin{proposition} The metrics $g$,  the symplectic form $\Omega$,
and the almost complex structure  ${\mathcal I}$ are invariant
under the gauge group action on ${\mathcal C}$. \label{inv}
\end{proposition}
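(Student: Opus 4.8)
The plan is to reduce everything to the differential of the gauge action on the (affine) configuration space $\mathcal{C}$ and then to check the three objects term by term. Since $\mathcal{A}$ is affine over $\Omega^1(M, i\RR)$ while $\Gamma(M, L\oplus L)$ and $\mathcal{H}$ are linear, the tangent space $T_p\mathcal{C}$ is canonically identified with $\Omega^1(M, i\RR)\oplus\Gamma(M, L\oplus L)\oplus\mathcal{H}$ at every point, so both $\mathcal{I}$ and the integrands defining $g$ and $\Omega$ are given by the same formulas at all points; this is what lets ``invariance'' be phrased as commutation with, respectively invariance under, the linearized action.

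First I would compute the linearization of the action $g_u\colon (A,\Psi,\Phi)\mapsto(A+u^{-1}du,\,u^{-1}\Psi,\,\Phi)$. Because $u^{-1}du$ is independent of $A$, the connection slot is merely translated and its differential is the identity; the spinor slot is multiplied by $u^{-1}$; the Higgs slot is fixed. Hence for $X=(\alpha_1,\beta,\gamma_1)$ one gets $(dg_u)_p X=(\alpha_1,\,u^{-1}\beta,\,\gamma_1)$, and likewise for $Y$, so only the middle (spinor) component feels the transformation. I would then verify invariance of $g$: the form terms $\int_M *\alpha_1\wedge\alpha_2$ and $\int_M *\gamma_1\wedge\gamma_2$ are untouched, and for the middle term, writing $u=e^{i\theta}$ so that $|u|=1$ and $\overline{u^{-1}}=u$, the action sends $\beta_1\mapsto u^{-1}\beta_1$ and $\bar\beta_2\mapsto u^{-1}\bar\beta_2$ (equivalently $\beta_2\mapsto u\beta_2$), and similarly for $\eta$. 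Substituting into $\langle\beta,\eta\rangle_H=\beta_1 H\bar\eta_1+\bar\beta_2 H\eta_2$, the factors $u^{-1}$ and $u$ cancel pointwise, so $\langle\beta,\eta\rangle_H$, hence $Re\langle\beta,\eta\rangle_H\,\omega$ and its integral, are unchanged.

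Next I would treat $\mathcal{I}$. Since $\mathcal{I}$ acts as the same constant-coefficient block operator $\mathrm{diag}(*,i,i,*)$ at every point, its invariance means it commutes with $(dg_u)_p$. On the form slots this is immediate, as $(dg_u)_p$ is the identity there; on the spinor slots $\mathcal{I}$ is pointwise multiplication by the constant $i$, which commutes with pointwise multiplication by the function $u^{-1}$. Thus $\mathcal{I}\circ(dg_u)_p=(dg_u)_p\circ\mathcal{I}$. Finally, invariance of $\Omega$ follows either by the same cancellation applied to $Re\langle I\beta,\eta\rangle\,\omega$ (using $I\beta=i\beta$, with the two form terms obviously invariant), or more economically from the compatibility relation $\Omega(X,Y)=g(\mathcal{I}X,Y)$ together with the invariance of $g$ and of $\mathcal{I}$ already established.

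I expect no serious obstacle here; the single substantive point in the whole argument is the cancellation in the Hermitian pairing, which rests entirely on $u$ being \emph{unitary}. The only place calling for care is the bookkeeping of the conjugation on the second spinor component $\bar\beta_2$, and confirming that $|u|=1$ is exactly the condition making $\langle\,\cdot\,,\,\cdot\,\rangle_H$ survive; once that is in hand, the form and Higgs contributions are invariant for the trivial reason that the gauge action does not move them.
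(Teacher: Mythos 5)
Your proposal is correct and takes essentially the same route as the paper: the paper's proof consists precisely of observing that $u_*\colon T_p\mathcal{C}\to T_{u\cdot p}\mathcal{C}$ is the map $(\mathrm{Id},\,u^{-1},\,\mathrm{Id})$ and then asserting that invariance of $g$ and $\Omega$ and commutation of $\mathcal{I}$ with $u_*$ are ``easy to check.'' You have simply carried out that check in full, correctly isolating the one substantive point --- the cancellation in the Hermitian pairing via $|u|=1$, including the bookkeeping $\bar\beta_2\mapsto u^{-1}\bar\beta_2$ (equivalently $\beta_2\mapsto u\beta_2$) --- and your shortcut of deducing invariance of $\Omega$ from $\Omega(X,Y)=g(\mathcal{I}X,Y)$ is a clean economy consistent with the paper's setup.
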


\begin{proof}
 Let $p = (A, \tilde{\Psi}, \Phi) \in {\mathcal C}$ and $u \in G, $ where
$u \cdot p = (A + u^{-1} du, u^{-1} \Psi, \Phi)$.

Then $u_* : T_p {\mathcal C} \rightarrow T_ {u \cdot p} {\mathcal C} $ is given
by the mapping $(Id, u^{-1}, Id)$ and it is now easy to check that $g$ and
$\Omega$ are invariant and ${\mathcal I}$ commutes with $u_*$.
\end{proof}

\begin{proposition}
\label{props} The equation $(2.1)$  can be realised as a moment
map $\mu = 0$ with respect to the action of the gauge group and
the symplectic form $\Omega$. \label{moment1}
\end{proposition}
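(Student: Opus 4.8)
The plan is to exhibit an explicit map $\mu$ from $\mathcal{C}$ into the dual of the Lie algebra of the gauge group and verify the defining property of a moment map against $\Omega$. The Lie algebra of $\mathcal{G}$ is $\Omega^0(M, i\RR)$, whose dual I identify with $\Omega^2(M, i\RR)$ through the pairing $\langle \phi, f\rangle = \int_M f\phi$, which is a real number since $f$ and $\phi$ are both purely imaginary. The candidate is
$$\mu(A, \Psi, \Phi) = F(A) - i\frac{|\psi_1|_H^2 + |\psi_2|_H^2}{2}\,\omega \in \Omega^2(M, i\RR),$$
so that the locus $\mu = 0$ is precisely equation $(2.1)$. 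For $f \in \Omega^0(M, i\RR)$ I write $\mu_f = \int_M f\mu$ for the associated Hamiltonian function, and the goal is to check $d\mu_f(Y) = \Omega(X_f, Y)$ for every $Y \in T_p\mathcal{C}$.

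First I would record the fundamental vector field of the action: differentiating $(A + u^{-1}du,\, u^{-1}\Psi,\, \Phi)$ at $u = e^{tf}$ gives $X_f = (df, -f\Psi, 0)$, exactly the local expression already used in the dimension count of Proposition~\ref{props}. Next, for an arbitrary $Y = (\alpha_2, \eta, \gamma_2)$ I compute $d\mu_f(Y)$ by varying each term of $\mu$: since $F(A)$ is affine in $A$, its variation is $d\alpha_2$, while varying $|\psi_1|_H^2 + |\psi_2|_H^2$ yields $2\,\mathrm{Re}\langle\psi_1,\eta_1\rangle_H + 2\,\mathrm{Re}\langle\psi_2,\eta_2\rangle_H$. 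On the other side I substitute $X_f$ into $\Omega$: the $\Phi$-slot drops out because the third entry of $X_f$ vanishes, the first term $-\int_M df\wedge\alpha_2$ becomes $\int_M f\,d\alpha_2$ after Stokes' theorem (here compactness of $M$ with empty boundary is essential), and the spinor term simplifies using $\langle I\beta, \eta\rangle_H = i\langle\beta,\eta\rangle_H$, which follows at once from $I = \mathrm{diag}(i,i)$ and the definition of $\langle\,,\,\rangle_H$.

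Matching the two sides is then a reality computation: writing $f = ig$ with $g$ real, one has $-if = g$ and $\mathrm{Re}\,\bar w = \mathrm{Re}\,w$, so the conjugation mismatch between $\langle\Psi,\eta\rangle_H = \langle\psi_1,\eta_1\rangle_H + \overline{\langle\psi_2,\eta_2\rangle_H}$ (forced by the convention $\Psi = (\psi_1,\bar\psi_2)^T$) and the real linearization of $(2.1)$ washes out, and both $d\mu_f(Y)$ and $\Omega(X_f, Y)$ collapse to $\int_M f\,d\alpha_2 + \int_M g\bigl(\mathrm{Re}\langle\psi_1,\eta_1\rangle_H + \mathrm{Re}\langle\psi_2,\eta_2\rangle_H\bigr)\omega$. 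This establishes $d\mu_f = \iota_{X_f}\Omega$. Finally, since the gauge group is abelian its coadjoint action is trivial, so equivariance of $\mu$ reduces to plain gauge invariance, which is immediate from $F(A + u^{-1}du) = F(A)$ and $|u^{-1}\Psi|_H = |\Psi|_H$, and is consistent with the invariance of $\Omega$ proved in Proposition~\ref{inv}. I expect the only genuine friction to be this sign-and-conjugation bookkeeping around the $\psi_2$ versus $\bar\psi_2$ convention; everything else is the standard gauge-theoretic moment-map calculation.
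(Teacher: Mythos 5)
Your proposal is correct and takes essentially the same approach as the paper: the identical candidate $\mu(A,\Psi,\Phi) = F(A) - \frac{i}{2}\left(|\psi_1|_H^2 + |\psi_2|_H^2\right)\omega$, the same fundamental vector field $(df, -f\Psi, 0)$, and the same verification $d\mu_f(Y) = \Omega(X_f, Y)$ via Stokes' theorem on closed $M$, the identity $\langle I\beta,\eta\rangle_H = i\langle\beta,\eta\rangle_H$, and the reality bookkeeping from $\bar f = -f$. Your closing remark on equivariance (reducing to plain gauge invariance since the gauge group is abelian) is a small point the paper leaves implicit, but it does not change the argument.
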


\begin{proof}
Let $\zeta \in \Omega(M, i{\RR })  $ be the Lie algebra of the
gauge group (the gauge group element being $u = e^{ \zeta}$ );
It generates a vector field $X_{\zeta}$ on ${\mathcal C}$ as follows :
$$X_{\zeta} (A, \Psi, \Phi) = (d \zeta, -\zeta \Psi, 0) \in T_p
{\mathcal C},p = (A, \Psi, \Phi) \in {\mathcal C}.$$

We show next that $X_{\zeta}$ is Hamiltonian. Namely, define
$H_{\zeta} : {\mathcal C} \rightarrow {\CC} $ as follows: $$
H_{\zeta} (p) = \int_{M} \zeta \cdot (  F_{A} - i\frac{( |\psi_1|_H
^2 + |\psi_2|_H ^2 )}{2} \omega). $$  Then for $X = (\alpha,
\beta, \gamma) \in T_p {\mathcal C}$.
 \begin{eqnarray*}
 dH_{\zeta} ( X ) & = & \int_M \zeta d \alpha  -i \int_M \zeta  Re ( \psi_1 H \bar{\beta_1} + \psi_2 H  \bar{ \beta_2} )  \omega   \\
 &= &\int_M (-d \zeta) \wedge \alpha  + \int_M Re < I (-\zeta
\left[ \begin{array}{cc}
\psi_1 \\
\bar{\psi}_2
\end{array} \right]) , \left[ \begin{array}{cc}
\beta_1 \\
\bar{\beta}_2
\end{array} \right] >_H \omega \\
 & = & \Omega ( X_{\zeta},  X ),
 \end{eqnarray*}
where we use that $\bar{\zeta} = - \zeta$.

 Thus we can define the moment map $ \mu : {\mathcal C} \rightarrow
 \Omega^2 ( M, i{\RR} )= {\mathcal G}^* $ ( the dual of the Lie
 algebra of the gauge group)  to be $$ \mu ( A, \Psi)
 \stackrel{\cdot}{=} (F(A) - i\frac{( | \psi_1 |_H ^2 +
 |\psi_2|_H^2)}{2}  \omega). $$ Thus equation $(2.1))$ is $\mu = 0$.
 \end{proof}

 \begin{lemma}
 Let $S$ be the solution spaces to equation $(2.1)- (2.3)$,  $X \in
 $  $ T_p {\mathcal S}$. Then ${\mathcal I}X $ $\in T_p {\mathcal S}$
 if and only if $X$ is orthogonal to the gauge orbit $ O_p = G
 \cdot p$. \label{ortho}
 \end{lemma}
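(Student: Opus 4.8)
The plan is to exploit the decomposition of the linearised equations $(2.1)'$--$(2.3)'$ together with the compatibility of ${\mathcal I}$, $g$ and $\Omega$. Writing $d_2X = (A(X),B(X),C(X))$ with the components $A,B,C$ as in the deformation complex ${\mathcal C}$, we have $T_p{\mathcal S}=\ker d_2$, so that $X\in T_p{\mathcal S}$ precisely when $A(X)=B(X)=C(X)=0$. The tangent space to the gauge orbit is $\{X_\zeta=(d\zeta,-\zeta\Psi,0):\zeta\in\Omega^0(M,i{\RR})\}$, so $X$ is orthogonal to $O_p$ iff $g(X,X_\zeta)=0$ for every $\zeta$. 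Since $X\in T_p{\mathcal S}$ is assumed, it remains only to decide when ${\mathcal I}X$ again satisfies all three linearised equations.

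First I would show that ${\mathcal I}$ automatically preserves the kernels of the two ``holomorphic'' components $B$ and $C$. Using $*(\alpha^{0,1})=i\alpha^{0,1}$ and $*(\gamma^{0,1})=i\gamma^{0,1}$ together with ${\mathcal I}\beta = I\beta$ where $I=\mathrm{diag}(i,i)$, a direct substitution gives $B({\mathcal I}X)=\partial((*\gamma)^{0,1})=i\,\partial\gamma^{0,1}=iB(X)$, and, because $\bar\partial+A^{0,1}$ and the multiplication matrix with entries $\tfrac12\bar\phi d\bar z$ are complex linear, $C({\mathcal I}X)=iC(X)$. Hence if $B(X)=C(X)=0$ then $B({\mathcal I}X)=C({\mathcal I}X)=0$ with no further hypothesis, and therefore ${\mathcal I}X\in T_p{\mathcal S}$ if and only if the single remaining moment-map equation $A({\mathcal I}X)=0$ holds.

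The heart of the argument is then to identify $A({\mathcal I}X)=0$ with orthogonality to $O_p$, and here I would avoid an explicit integration by parts by invoking Proposition~\ref{moment1}. For any $\zeta$ the proof of that proposition gives $\int_M\zeta\,A(Y)=dH_\zeta(Y)=\Omega(X_\zeta,Y)$ for all $Y$, the real part being understood as in that proof. Applying this to $Y={\mathcal I}X$ and using $\Omega(U,V)=g({\mathcal I}U,V)$ together with the fact that ${\mathcal I}$ is a $g$-isometry (which follows from ${\mathcal I}^2=-1$ and the skew-symmetry of $\Omega$, via $g({\mathcal I}U,{\mathcal I}V)=\Omega(U,{\mathcal I}V)=g(U,V)$), I obtain
$$\int_M\zeta\,A({\mathcal I}X)=\Omega(X_\zeta,{\mathcal I}X)=g({\mathcal I}X_\zeta,{\mathcal I}X)=g(X_\zeta,X).$$
Since $\zeta\in\Omega^0(M,i{\RR})$ is arbitrary, $A({\mathcal I}X)=0$ is equivalent to $g(X,X_\zeta)=0$ for all $\zeta$, i.e. to $X\perp O_p$. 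Combined with the previous paragraph this yields ${\mathcal I}X\in T_p{\mathcal S}\iff X\perp O_p$.

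The step I expect to be the main obstacle is the bookkeeping in the two verifications $B({\mathcal I}X)=iB(X)$ and $C({\mathcal I}X)=iC(X)$: one must track the reality convention relating $\beta_2$ and $\bar\beta_2$ when ${\mathcal I}$ multiplies the spinor components by $i$, and confirm that the $\alpha^{0,1}$, $\gamma^{0,1}$ and $\bar\phi d\bar z$ terms all transform by the common factor $i$. Once the holomorphic components are shown to be ${\mathcal I}$-equivariant, the moment-map identity makes the remaining equivalence essentially formal; the only point to check carefully there is that the linearisation of $\mu$ paired against $\zeta$ is exactly $\int_M\zeta\,A(\cdot)$, so that $A(\cdot)$ and $dH_\zeta$ agree in the sense of the proof of Proposition~\ref{moment1}.
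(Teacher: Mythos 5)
Your proposal is correct and takes essentially the same approach as the paper: you check that the linearisations of $(2.2)$ and $(2.3)$ are automatically preserved by ${\mathcal I}$ because the factor of $i$ pulls out of the complex-linear operators (with $(*\gamma)^{0,1}=i\gamma^{0,1}$), and you convert the remaining condition $A({\mathcal I}X)=0$ into $g(X,X_\zeta)=0$ via the moment-map pairing and the compatibility of $g$, $\Omega$, ${\mathcal I}$. The only cosmetic difference is that you route the key identity through $dH_\zeta$ and the $g$-isometry property of ${\mathcal I}$, whereas the paper writes the same chain directly as $g(X,X_\zeta)=-\Omega({\mathcal I}X,X_\zeta)=-\int_M \zeta\cdot d\mu({\mathcal I}X)$.
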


 \begin{proof}
 Let $X_{\zeta} \in T_p O_p,$ where $\zeta $ $\in $ $\Omega^{0} (M,
i {\RR })$,
$g( X , X_{\zeta} ) = -\Omega ({\mathcal I} X, X_{\zeta} ) =
- \int_M \zeta \cdot d \mu ( {\mathcal I} X ),$ and therefore
${\mathcal I} X$ satisfies the linearization of equation $(2.1)$ iff
$ d \mu ({\mathcal I} X)  = 0$, i.e.,  iff $g (X, X_{\zeta}) = 0$ for all
$\zeta$. Second, it is easy to check that ${\mathcal I} X$ satisfies the
 linearization of equation $(2.2), (2.3) $ whenever $X$ does.

For instance the action of ${\mathcal I}$ in the linearisation of 
equation $(2.3)$ is 

$ \left[ \begin{array}{cc}
  \bar{\partial} + A^{0,1} & \frac{1}{2} \bar{\phi} d \bar{z}  \\
 -\frac{1}{2} \bar{\phi} d \bar{z} & \bar{\partial} + A^{0,1}
\end{array} \right] \left[
\begin{array}{c}
i \beta_1 \\
i \bar{\beta}_2
\end{array} \right]    +  \left[ \begin{array}{cc}
 i \alpha^{0,1}  & \frac{i}{2}\gamma^{0,1}   \\
 -\frac{i}{2} \gamma^{0,1} & i \alpha^{0,1} 
\end{array} \right] \left[
\begin{array}{cc}
\psi_1 \\
\bar{\psi}_2
\end{array} \right] = 0 $ since the factor of $i$ comes out and the 
remaining equation is linearization of $(2.3)$.  
(Note that the action of ${\mathcal I}$ is  $\beta_1 \rightarrow i \beta_1,$
$\bar{\beta}_2 \rightarrow i \bar{\beta}_2$,   $ \alpha^{0,1} \rightarrow i \alpha^{0,1}$ and 
 $\gamma^{0,1} \rightarrow i \gamma^{0,1}.$)  
 \end{proof}

 \begin{theorem}
 ${\mathcal N} $  has a natural symplectic structure and an almost
 complex structure  compatible with the symplectic form $\Omega $
 and the metric $g$. \label{alcom}
 \end{theorem}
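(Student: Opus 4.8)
The plan is to perform the standard symplectic reduction (Marsden–Weinstein) for the gauge group action on the configuration space $\mathcal{C}$, using the structures already assembled in the preceding results. The moduli space $\mathcal{N}$ is the quotient of the solution space $\mathcal{S}$ of $(2.1)$–$(2.3)$ by the gauge group $\mathcal{G}$. By Proposition \ref{moment1}, the equation $(2.1)$ is exactly the moment map condition $\mu=0$, so the solution space sits inside the zero level set $\mu^{-1}(0)$, cut down further by the linear equations $(2.2)$ and $(2.3)$. Thus $\mathcal{N}$ is a symplectic (really Marsden–Weinstein) quotient, and I would exhibit its symplectic form as the one descending from $\Omega$.

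First I would check that $\Omega$ restricts to a closed two-form on $\mathcal{S}$. Since $\Omega$ is defined by explicit integral pairings with constant coefficients on the affine space $\mathcal{C}$, it is manifestly closed there, and its pullback to the submanifold $\mathcal{S}$ remains closed. Next, by Proposition \ref{inv}, the form $\Omega$, the metric $g$, and the almost complex structure $\mathcal{I}$ are all $\mathcal{G}$-invariant. To descend $\Omega$ to the quotient I must verify that it is basic, i.e. both invariant (done) and horizontal: it must annihilate the directions tangent to the gauge orbits. For $X_\zeta$ tangent to an orbit and any $X\in T_p\mathcal{S}$, the computation $\Omega(X_\zeta,X)=dH_\zeta(X)$ from Proposition \ref{moment1} shows $\Omega(X_\zeta,X)=\int_M \zeta\, d\mu(X)$, which vanishes on $T_p\mathcal{S}$ because tangent vectors to the solution space satisfy the linearized equation $(2.1)'$, namely $d\mu(X)=0$. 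Hence $\Omega$ is horizontal along $\mathcal{S}$ and descends to a well-defined closed two-form on $\mathcal{N}$.

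For the compatible almost complex structure, I would use Lemma \ref{ortho}: at a solution $p$, the vectors $X\in T_p\mathcal{S}$ orthogonal to the gauge orbit are precisely those for which $\mathcal{I}X$ again lies in $T_p\mathcal{S}$. Identifying $T_p\mathcal{N}$ with this orthogonal complement (the horizontal subspace of $T_p\mathcal{S}$), the operator $\mathcal{I}$ preserves it and therefore induces an almost complex structure on $T_p\mathcal{N}$; that $\mathcal{I}^2=-\mathrm{Id}$ is immediate from its diagonal definition. Compatibility with the descended $g$ and $\Omega$ follows because the defining relation $g(\mathcal{I}X,Y)=\Omega(X,Y)$ is preserved under restriction to the horizontal subspace and hence on the quotient, where $g$ is positive definite by the definite-sign computation recorded after the definition of $g$.

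The remaining point, which is the main obstacle, is nondegeneracy of the induced form on $\mathcal{N}$. On $\mathcal{C}$ the form $\Omega$ is nondegenerate by its pairing with the positive-definite metric $g$ via $\mathcal{I}$, but after restriction to $\mathcal{S}$ and passage to the quotient one must confirm that the radical of $\Omega|_{T_p\mathcal{S}}$ is exactly the tangent space to the gauge orbit, so that nothing degenerate survives in $T_p\mathcal{N}$. This is where the compatibility triple $(g,\Omega,\mathcal{I})$ does the work: if $X$ is horizontal and $\Omega(X,Y)=0$ for all horizontal $Y$, then since $\mathcal{I}X$ is again horizontal (Lemma \ref{ortho}) we get $g(X,X)=\Omega(\mathcal{I}X,X)=-\Omega(X,\mathcal{I}X)=0$, forcing $X=0$ by positive-definiteness of $g$. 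This argument requires that one restricts attention to smooth points of $\mathcal{N}$, i.e.\ where $H^2(\mathcal{C})=0$ so that $\mathcal{N}$ is genuinely a manifold and the horizontal complement is well-behaved; away from such points the statement is understood in the virtual or orbifold sense. I would therefore state and prove nondegeneracy precisely on the smooth locus, which is the technically delicate part of the argument.
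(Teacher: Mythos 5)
Your proposal is correct and follows essentially the same route as the paper: Marsden--Weinstein reduction using the moment map of Proposition \ref{moment1}, gauge invariance from Proposition \ref{inv}, and Lemma \ref{ortho} to show the $g$-orthogonal complement of the gauge orbit is $\mathcal{I}$-invariant, so that $\mathcal{I}$, $g$, and $\Omega$ all descend compatibly; in fact you spell out the nondegeneracy step (pairing a putative radical element $X$ against the horizontal vector $\mathcal{I}X$ and invoking positive definiteness of $g$) that the paper only asserts with ``equations $(2.2)$, $(2.3)$ do not give rise to new degeneracy.'' The only blemish is a harmless sign slip: with the convention $g(\mathcal{I}X,Y)=\Omega(X,Y)$ one has $g(X,X)=\Omega(X,\mathcal{I}X)=-\Omega(\mathcal{I}X,X)$, not $g(X,X)=\Omega(\mathcal{I}X,X)$, but since the hypothesis kills $\Omega(X,Y)$ for every horizontal $Y$ the conclusion $X=0$ is unaffected.
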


 \begin{proof}
 First we show that the almost complex structure descends to
 ${\mathcal N}$. Then using this and the symplectic quotient
 construction we will show that $\Omega$ gives a symplectic
 structure on ${\mathcal N}$.

 (a) To show that ${\mathcal I}$ descends as  an almost complex
 structure we let $pr: {\mathcal S} \rightarrow {\mathcal S}/G =
 {\mathcal N}$ be the projection map and set $[p] = pr (p)$. Then
 we can naturally identify $T_{[p]}  {\mathcal N} $ with the
 quotient space $T_p {\mathcal S} / T_p O_p, $ where $ O_p = G
 \cdot p $ is the gauge orbit. Using the metric $g$ on ${\mathcal
 S}$ we can realize $T_{[p]} {\mathcal N}$ as a subspace in $T_p
 {\mathcal S}$ orthogonal to  $T_p O_p$. Then by lemma
 ~\ref{ortho}, this subspace is invariant under ${\mathcal I}$.
Thus $I_{[p]} ={\mathcal I} |_{T_p (O_p )^{\perp}}$, gives the desired
almost  complex structure. This construction does not depend on the choice of
$p$ since ${\mathcal I}$ is $G$-invariant.

 (b) The symplectic structure $\Omega$ descends to $\mu^{-1}(0) /
 G$, (by proposition \ref{props} and by the Marsden-Wienstein
 symplectic quotient construction ,~\cite{GS}, ~\cite{H}, since the
 leaves of the characteristic foliation are the gauge orbits). Now,
 as a $2$-form $\Omega$ descends to ${\mathcal N}$,   due to
 proposition (~\ref{inv}) so does the metric $g$. We check that
 equation $(2.2), (2.3),$ does not give rise to new degeneracy of
 $\Omega$ (i.e. the only degeneracy of $\Omega$ is due to $(2.1)$
 but along gauge orbits). Thus $\Omega $ is symplectic on ${\mathcal N}$.
 Since $g$ and ${\mathcal I}$ descend to ${\mathcal N}$ the latter is
 symplectic and almost complex.
 \end{proof}

Choose a  ${\psi_0} \in \Gamma (M, L) $ such that its gauge equivalence class is $fixed$ and  $\psi_0 = 0$ only 
on a set of measure zero on $M$. This $\psi_0$ has nothing to do with $\psi_1$ 
but we allow it to gauge transform as $u^{-1} \psi_0$ when $\psi_1$ gauge transforms to $u^{-1} \psi_1.$ (This will be handy in defining the determinant line bundles). 

Define a symplectic form on ${\mathcal C}$ as
\begin{eqnarray*}
\Omega_{\psi_0}(X, Y) &=& -\int_{M} \alpha_1 \wedge \alpha_2 + \int_{M} Re <
I \beta , \eta>_H  |\psi_0|_H^2 \omega \\
& & - \int_M \gamma_1 \wedge \gamma_2 \\
&=& -\int_{M} \alpha_1 \wedge \alpha_2 + \frac{i}{2} \int_{M}[( \beta_1 H \bar{\eta}_1 - \bar{\beta}_1 H \eta_1 ) \\
& & - ( \beta_2 H \bar{\eta}_2 - \bar{\beta}_2 H \eta_2 )] |\psi_0|^2_H \omega - \int_M \gamma_1 \wedge \gamma_2
\end{eqnarray*}

$|\psi_0|^2_H$ plays the role of  a conformal rescaling of the volume form 
$\omega$ on $M$ 
which appears in $\Omega$, where we allow the conformal factor to have zeroes 
on sets of measure zero. 

\begin{theorem}
$\Omega_{\psi_0}$ descends to ${\mathcal M}$ as a symplectic form.
\end{theorem}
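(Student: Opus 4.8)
The plan is to run, step for step, the same chain of arguments that produced the descent of $\Omega$ in Theorem~\ref{alcom} (through Proposition~\ref{inv}, Proposition~\ref{moment1}, and Lemma~\ref{ortho}), the only new ingredient being the fixed, gauge-invariant conformal weight $|\psi_0|_H^2$ inserted into the spinor term. First I would record that $|\psi_0|_H^2 \in C^\infty(M)$ really is gauge invariant: under $u \in {\mathcal G}$ one has $\psi_0 \mapsto u^{-1}\psi_0$ with $u$ unitary, so $|u^{-1}\psi_0|_H^2 = |\psi_0|_H^2$. Hence the verbatim computation of Proposition~\ref{inv} gives gauge invariance of $\Omega_{\psi_0}$. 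Closedness is immediate: $\Omega_{\psi_0}$ is a bilinear form on the single vector space modelling the affine space ${\mathcal C}$, whose coefficients are built only from the fixed data $H$, $\omega$, and $|\psi_0|_H^2$ and so do not depend on the base point $p$; a translation-invariant $2$-form on an affine space is closed, so $d\Omega_{\psi_0}=0$.

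Next I would identify the correct moduli space via the moment map. Repeating the computation of Proposition~\ref{moment1} with $\Omega_{\psi_0}$ in place of $\Omega$, the only change is that the spinor pairing carries the weight $|\psi_0|_H^2$; using $\bar{\zeta}=-\zeta$ exactly as before one obtains
\[ dH^{\psi_0}_\zeta(X) = \Omega_{\psi_0}(X_\zeta, X), \]
where now
\[ H^{\psi_0}_\zeta(p) = \int_M \zeta \cdot \left( F_A - i\frac{|\psi_1|_H^2 + |\psi_2|_H^2}{2}\,|\psi_0|_H^2\,\omega\right). \]
Thus the moment map for the gauge action with respect to $\Omega_{\psi_0}$ is $\mu_{\psi_0}(A,\Psi) = F(A) - i\frac{|\psi_1|_H^2 + |\psi_2|_H^2}{2}|\psi_0|_H^2\omega$, and the level set $\mu_{\psi_0}=0$ is the $\psi_0$-twisted first equation. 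The moduli space ${\mathcal M}$ to which the theorem refers is then the quotient by ${\mathcal G}$ of the solution space of this twisted curvature equation together with the unchanged equations $(2.2)$--$(2.3)$; the weight $|\psi_0|_H^2$ only rescales the volume form in the curvature equation, exactly as the remark preceding the theorem indicates.

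To obtain compatibility I would replace $g$ by the weighted metric
\[ g_{\psi_0}(X,Y) = \int_M *\alpha_1 \wedge \alpha_2 + \int_M Re\langle\beta,\eta\rangle_H\,|\psi_0|_H^2\,\omega + \int_M *\gamma_1 \wedge \gamma_2, \]
and check, using $**=-1$ on $1$-forms and $I = i\cdot\mathrm{Id}$, that $g_{\psi_0}({\mathcal I}X,Y) = \Omega_{\psi_0}(X,Y)$ with the same almost complex structure ${\mathcal I}$; this is the identical three-line verification as before, the weight passing inertly through the middle term because ${\mathcal I}$ acts there by the scalar $i$. Gauge invariance of $g_{\psi_0}$ and the commutation of ${\mathcal I}$ with $u_*$ follow as in Proposition~\ref{inv}. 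With $g_{\psi_0}$ and ${\mathcal I}$ in hand, Lemma~\ref{ortho} transcribes directly: $g_{\psi_0}(X, X_\zeta) = -\Omega_{\psi_0}({\mathcal I}X, X_\zeta) = -\int_M \zeta\cdot d\mu_{\psi_0}({\mathcal I}X)$, so ${\mathcal I}X$ solves the linearized twisted first equation iff $X$ is $g_{\psi_0}$-orthogonal to the gauge orbit, while ${\mathcal I}$ preserves the linearizations of $(2.2)$--$(2.3)$ exactly as computed there since the scalar $i$ factors out. The Marsden--Weinstein reduction of Theorem~\ref{alcom} then applies word for word: ${\mathcal I}$ descends to an almost complex structure on ${\mathcal M}=\mu_{\psi_0}^{-1}(0)/{\mathcal G}$, $\Omega_{\psi_0}$ descends by symplectic reduction, and $(2.2)$--$(2.3)$ contribute no new degeneracy.

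The one genuinely new point, and the step I expect to require care, is the positive-definiteness of $g_{\psi_0}$, since the weight $|\psi_0|_H^2$ is permitted to vanish on a set of measure zero. I would argue that the spinor term $\int_M(|\beta_1|^2 + |\beta_2|^2)|\psi_0|_H^2\omega$ remains strictly positive whenever $\beta \neq 0$: because $|\psi_0|_H^2 > 0$ almost everywhere the integrand is nonnegative and vanishes identically only if $\beta = 0$ almost everywhere, hence, for continuous $\beta$, only if $\beta \equiv 0$; and when $\beta = 0$ the $\alpha$- and $\gamma$-terms are already definite, as recorded after the definition of $g$. This keeps $g_{\psi_0}$ a genuine weak Riemannian metric, so that the orthogonal splitting of Lemma~\ref{ortho} and the realization of $T_{[p]}{\mathcal M}$ as the $g_{\psi_0}$-orthogonal complement of the gauge orbit stay valid. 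Establishing this nondegeneracy cleanly, together with verifying that the measure-zero vanishing locus of $\psi_0$ does not disturb the elliptic framework underlying the quotient, is the crux of the argument.
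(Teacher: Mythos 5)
There is a genuine misstep here, and it is not in the computations (which are all correct) but in the identification of the space the theorem is about. You read the moment map of $\Omega_{\psi_0}$, obtained the $|\psi_0|_H^2$-twisted curvature equation $F(A)=i\frac{|\psi_1|_H^2+|\psi_2|_H^2}{2}|\psi_0|_H^2\,\omega$, and then declared ${\mathcal M}$ to be the quotient of the solutions of \emph{that} equation together with $(2.2)$--$(2.3)$, so that Marsden--Weinstein reduction applies word for word. But in the paper ``${\mathcal M}$'' is simply a typo for ${\mathcal N}$, the moduli space of the \emph{original} equations $(2.1)$--$(2.3)$: the whole point of the construction is a family of symplectic forms $\Omega_{\psi_0}$, parametrized by $\psi_0$, on one fixed moduli space, and this is forced by what follows --- the prequantum bundle ${\mathcal Q}_{\psi_0}$ is built over ${\mathcal N}$ with curvature $\frac{i}{\pi}\Omega_{\psi_0}$, and the closing remark compares the line bundles as $\psi_0$ varies, which only makes sense over a single base. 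On the actual ${\mathcal N}$, equation $(2.1)$ is the zero level of the moment map $\mu$ of $\Omega$, \emph{not} of your $\mu_{\psi_0}$; indeed for $X\in T_pS$ one has $d\mu(X)=0$ but in general $d\mu_{\psi_0}(X)\neq 0$, so $\Omega_{\psi_0}$ pairs nontrivially with gauge directions along $T_pS$ and symplectic reduction with respect to $\Omega_{\psi_0}$ simply does not apply to this level set. Your argument therefore proves a (reasonable) statement about a different, $\psi_0$-dependent space, not the paper's theorem.

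The paper's proof avoids reduction entirely. Closedness is, as you say, constancy of the form on the affine space ${\mathcal C}$. For nondegeneracy it keeps the \emph{original} metric $g$ and the original Lemma~\ref{ortho}: $T_{[p]}{\mathcal N}$ is realized as the $g$-orthogonal complement of the gauge orbit inside $T_pS$, and since ${\mathcal G}({\mathcal I}X,X_\zeta)=\Omega(X,X_\zeta)=-\int_M\zeta\,d\mu(X)=0$ (with the untwisted $\mu$), the vector ${\mathcal I}X$ again lies in this subspace. Then one simply evaluates $\Omega_{\psi_0}({\mathcal I}X,X)$ and finds three integrals of one common sign, the spinor term carrying the weight $|\psi_0|_H^2$, which is positive a.e.; hence $X=0$ a.e.\ and $\Omega_{\psi_0}$ is nondegenerate on ${\mathcal N}$. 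Your positive-definiteness check for $g_{\psi_0}$ is essentially this same sign computation, but you deploy it only to keep your auxiliary weighted metric nondegenerate rather than to prove nondegeneracy of $\Omega_{\psi_0}$ on the untwisted moduli space; note also that replacing $g$ by $g_{\psi_0}$ changes the horizontal slice modelling $T_{[p]}{\mathcal N}$, so even the tangent-space identification in your argument differs from the one the paper (and its Lemma~\ref{ortho}) uses. The fix is to discard the twisted moment map and transplant your sign argument into the paper's framework: original equations, original metric, and the pairing of $X$ against ${\mathcal I}X$.
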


\begin{proof}
Let $p= (A, \Psi, \Phi).$

It is easy to show that $\Omega_{\psi_0}$ is closed (this follows from the fact that on ${\mathcal C}$ it is a constant form -- does not depend on $(A, \Psi, \Phi)$). 
We have to show it is 
non-degenerate. 

Suppose there exists $(\alpha_1, \beta, \gamma_1) \in T_{[p]}({\mathcal N})$
s.t. $$\Omega_{\Psi_0} ((\alpha_2, \eta, \gamma_2), (\alpha_1, \beta, \gamma_1)) = 0$$ 
$\forall$ $(\alpha_2, \eta, \gamma_2) \in T_{[p]} ({\mathcal N})$.  
Using the metric ${\mathcal G}$  we identify $T_{[p]} {\mathcal N}$ with 
the subspace in $T_p {\mathcal S},$  ${\mathcal G}$-orthogonal to  
$T_p O_p$ (i.e. the tangent space to the moduli space is identified to the tangent space to solutions which are orthogonal to the gauge orbits, the orthogonality is 
with respect to the metric ${\mathcal G}$.)
Thus $(\alpha_1, \beta, \gamma_1), (\alpha_2, \eta, \gamma_1)$ satisfy the linearization of equation $(2.1)$, $(2.2)$ and $(2.3)$ and  ${\mathcal G} ((\alpha_1, \beta, \gamma_1), X_{\zeta}) = 0 $ and 
 ${\mathcal G} ((\alpha_2, \eta, \gamma_1), X_{\zeta}) = 0$ for all $\zeta$. 

Now, by ~\ref{ortho}, ${\mathcal I} (\alpha_1, \eta, \gamma_1) \in T_{p} S.$ Also, 
\begin{eqnarray*}
{\mathcal G}({\mathcal I} (\alpha_1, \beta, \gamma_1), X_{\zeta}) &=&  
\Omega ((\alpha_1, \beta, \gamma_1), X_{\zeta}) \\ 
&=& -\int_M \zeta  d \mu ((\alpha_1, \beta, \gamma_1))\\
&=& 0
\end{eqnarray*}
since $d \mu ((\alpha_1, \beta, \gamma_1))= 0$ is precisely one of the equations
saying  that $(\alpha_1, \beta, \gamma_1) \in T_p S$.
Thus ${\mathcal I} (\alpha_1, \beta, \gamma_1) \in T_{[p]} {\mathcal N},$ 
(since it is in $T_p S$ and ${\mathcal G}$-orthogonal to gauge orbits).

Take $ (\alpha_2, \eta, \gamma_1) = {\mathcal I} (\alpha_1, \beta, \gamma_1) = (* \alpha_1, I \beta, * \gamma_1). $ Then
\begin{eqnarray*}
0 &=&  \Omega_{\psi_0} ({\mathcal I} (\alpha_1, \beta, \gamma_1),  (\alpha_1, \beta, \gamma_1)) \\ 
&=& -\int_M ( * \alpha_1 \wedge \alpha_1) + \int_M Re < I(I \beta),  \beta>_H |\psi_0|_H^2 \omega - \int_M (* \gamma_1 \wedge \gamma_1)  \\
&=& -2i \int_M  |a|^2 dz \wedge d \bar{z}  - i \int_M (| \beta_1|_H^2 + 
|\beta_2|_H^2) |\psi_0|_H^2 e^{2 \sigma} h^2 dz \wedge d \bar{z} \\
& & - 2i \int_M |c|^2 dz \wedge d \bar{z} 
 \end{eqnarray*}
where $\omega = i e^{2 \sigma} h^2 dz \wedge d\bar{z} $ and 
$\alpha_1 = a dz - \bar{a} d \bar{z}  \in \Omega^1(M, i \RR)$ 
and $*\alpha_1 = -i( a dz + \bar{a} d \bar{z} )$ and $\gamma_1 = c dz - \bar{c} d \bar{z}. $ 
By the same sign of all the terms and the fact that $\psi_0$ has zero on a set of measure zero on $M$,   $(\alpha_1, \beta, \gamma_1) = 0$ a.e. 
Thus $\Omega_{\psi_0} $ is symplectic. 
\end{proof}

\section{Prequantum line bundle}
 In this section we briefly review the Quillen construction of the determinant 
line bundle of the Cauchy Riemann operator  $\bar{\partial}_A = \bar{\partial} + A^{(0,1)}$, ~\cite{Q},
which  enables  us to construct prequantum line bundle on the moduli 
space ${\mathcal N}$.

First let us note that a connection $A$ on a $U(1)$-principal bundle induces 
a  connection on any associated line bundle $L$. 
We will denote this connection also by $A$ since  the same ``
Lie-algebra valued $1$-form'' $A$ (modulo representations)  gives  a covariant 
derivative operator enabling you to take derivatives of  sections of $L$ 
~\cite{N}, page 348.
A very clear description of
the determinant line bundle can be found in ~\cite{Q} and
~\cite{BF}. Here we mention  the formula for the Quillen curvature
of the determinant line bundle $\wedge^{\rm top} (Ker \bar{\partial}_A)^{*}
\otimes \wedge^{\rm top}(Coker \bar{\partial}_A) = {\rm det}(\bar{\partial}_A)$,
 given the canonical unitary connection $\nabla_Q$, induced by the Quillen 
metric,~\cite{Q}.
Recall that the affine space ${\mathcal A}$ (notation as
in ~\cite{Q}) is an infinite-dimensional K\"{a}hler manifold. Here
each connection  is identified with its $(0,1)$ part which is the holomorphic 
part. Since the connection $A$ is unitary (i.e. $A = A^{(1,0)} + A^{(0,1)}$  
s.t. $\overline{A^{(1,0)}} = -A^{(0,1)}$) this identification is easy.
 In fact, for every $A \in {\mathcal A}$,
$T_A^{\prime} ({\mathcal A}) \stackrel{~}{=} \Omega^{0,1} (M, i \RR)$ and 
the corresponding K\"{a}hler  form  is given by 
\begin{eqnarray*}
F(\alpha_1^{(0,1)}, \alpha_2^{(0,1)}) &=&  {\rm Re} \int_M  (\alpha_1^{(0,1)} \wedge *_1 \alpha_2^{(0,1)}),\\
&=& - \frac{1}{2} \int_M \alpha_1 \wedge \alpha_2
\end{eqnarray*}
where $\alpha^{(0,1)}, \beta^{(0,1)} \in \Omega^{0,1} (M, i\RR),$ $\alpha_i = \alpha_i ^{1,0} + \alpha_i^{0,1} $  and 
$ *_1 $ is the Hodge-star operator such that 

$*_1(\alpha^{1,0}) = - \overline{\alpha^{1,0}} = \alpha^{0,1}$ and 

$*_1 (\alpha^{0,1}) =  \overline{\alpha^{0,1}} = - \alpha^{1,0}$ where  we have used
$\overline{\alpha_i^{(0,1)}} = - \alpha_i^{(1,0)}$, $i =1,2$. 
Let $\nabla_Q$ be the conection induced from the Quillen metric. Then the 
Quillen curvature of ${\rm det} (\bar{\partial}_A)$ is 
\begin{eqnarray*}
{\mathcal F}(\nabla_Q) &=& \frac{i}{ \pi} F \\
&=&\frac{-i}{2 \pi} \int_M(\alpha_1 \wedge \alpha_2).
\end{eqnarray*}

\section{Prequantum bundle on the moduli space ${\mathcal N}$}

First we note that to the connection $A$ we can add any one form and
still obtain a covariant derivative operator. 

Let $\omega = i e^{2 \sigma} h^2 dz \wedge d \bar{z}$ where recall $h$ is real. 
Let $\theta = h dz$ , $\bar{\theta} = h d \bar{z} $ be 1-forms (~\cite{GH}, page 28)
 such that $ \omega=i \theta \wedge \bar{\theta} = i e^{2 \sigma} h^2 dz \wedge d \bar{z}$.
Let $\psi_0$ be the same  section used to define $\Omega_{\psi_0}$ whose gauge equivalence class is fixed, and which gauge transforms in the same way as $\psi_1$ and $\bar{\psi}_2$.

$\psi_0$ has zero on a set of measure zero on $M$.
Note $ \psi_1 H \bar{\psi}_0$ and $\psi_2 H \psi_0$ are smooth gauge invariant functions on $M$.
Thus  we define $$B_{\pm} = B_{\pm}^{0,1} + B_{\pm}^{1,0}$$
such that $$B_{\pm}^{0,1} = \pm \bar{\psi_2} H \bar{\psi}_0 \bar{\theta} - \psi_1 H \bar{\psi}_0 \bar{\theta}, $$ 
$$B^{1,0}_{\pm}=  \bar{\psi}_1 H \psi_0 \theta \mp \psi_2 H \psi_0 \theta $$

  $B_{\pm}$  are two  unitary $1$-forms we would like
to add to the connection $A$ to make another connection form. (Note that $B^{0,1} = - \overline{B^{1,0}},$ as apt for unitary $1$-forms. ) 
Note that $B$ is gauge invariant, since $\psi_1,$ $\bar{\psi}_2$ and $\psi_0$ 
gauge transform in the same way.   Note that $A^{(0,1)} \pm B^{(0,1)}$ are the $(0,1)$
parts of  a connection defined by $A \pm B  = A^{(0,1)} \pm B^{(0,1)} + 
A^{(1,0)} \pm B^{(1,0)},$ where $B$ can be one of $B_{\pm}$.

{\bf Definitions:} Let us denote by ${\mathcal L}_{1}^{\pm} = 
{\rm det} [ \frac{1}{\sqrt{4}} (\bar{\partial}+  A^{(0,1)} \pm B_{+}^{(0,1)})]$
 two determinant bundles on the affine spaces ${\mathcal J}_{\pm} = \{\frac{1}{\sqrt{4}} (A^{(0,1)} \pm  B_{+}^{0,1}) | A \in {\mathcal A}, \Psi \in \Gamma(M, L \oplus L ) \}$ respectively. These affine spaces  are isomorphic to 
$ {\mathcal A} \times \Gamma(M,L \oplus L ) \times \Phi_0$, $\Phi_0$ being a fixed Higgs field.  We can extend it to all of ${\mathcal C} =  {\mathcal A} \times \Gamma(M,L \oplus L ) \times {\mathcal H}$ by defining the fibers to be same for all $\Phi$. 

Similarly define ${\mathcal L}_{2}^{\pm} = 
{\rm det} [ \frac{1}{\sqrt{4}} (\bar{\partial}+  A^{(0,1)} \pm B_{-}^{(0,1)})]$

Thus ${\mathcal P}_{\psi_0} = {\mathcal L}^{+}_{1} \otimes {\mathcal L}^{-}_{1}  \otimes {\mathcal L}^{+}_{2} \otimes {\mathcal L}^{+}_{2}$
well-defined line bundle on ${\mathcal C}$.

\begin{lemma}
${\mathcal P}_{\psi_0}$ is a well-defined line bundle over 
${\mathcal N} \subset {\mathcal C}/G$, 
where $G$ is the gauge group.
\end{lemma}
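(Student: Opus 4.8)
The plan is to realize ${\mathcal P}_{\psi_0}$ as a $G$-equivariant line bundle on ${\mathcal C}$ and then descend it by exactly the same quotient mechanism already used for $\Omega$ in Theorem~\ref{alcom}: a line bundle carrying a lift of a \emph{free} group action pushes down to a line bundle on the orbit space. The one structural input that makes this possible is the gauge invariance of the forms $B_{\pm}$ recorded just before the statement, so I would begin by making that the backbone of the argument.

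First I would check freeness of the action on the relevant locus. If $u\cdot p=p$ for $p=(A,\Psi,\Phi)$, then $u^{-1}du=0$, so $u$ is locally constant, and $u^{-1}\Psi=\Psi$; since we work where $\Psi\not\equiv 0$, at a point with $\Psi\neq 0$ we get $u=1$, and by local constancy on the connected surface $M$ we conclude $u\equiv 1$. Hence $G$ acts freely on $\{\Psi\not\equiv 0\}\supset {\mathcal N}$. Next comes the key covariance computation. Write $D_{\pm}=\tfrac12(\bar{\partial}+A^{(0,1)}\pm B^{(0,1)})$ for the operators whose determinants define the four factors ${\mathcal L}_1^{\pm},{\mathcal L}_2^{\pm}$; the constant scalar $\tfrac12$ commutes with everything and affects neither kernels, cokernels, nor determinant lines, so it may be ignored. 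Under $u\in G$ the data transform by $A^{(0,1)}\mapsto A^{(0,1)}+u^{-1}\bar{\partial}u$, while $\psi_1,\bar{\psi}_2$ and (by the convention fixed above the statement) $\psi_0$ all transform by $u^{-1}$. Since $u$ is $U(1)$-valued we have $\bar u=u^{-1}$, so each coefficient of $B_{\pm}$, being of the form $\psi_1 H\bar{\psi}_0$ or $\psi_2 H\psi_0$, is gauge invariant, and therefore $B_{\pm}$ is unchanged. A short direct calculation (the $u^{-1}\bar\partial u$ term exactly cancelling the derivative of $u^{-1}$) then yields, for any section $s$ of $L$,
\[
D_{\pm}(u\cdot p)\,(u^{-1}s)=u^{-1}\,D_{\pm}(p)\,s,
\qquad\text{i.e.}\qquad
D_{\pm}(u\cdot p)=u^{-1}\circ D_{\pm}(p)\circ u .
\]

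From this conjugation the equivariant structure is immediate: multiplication by $u^{-1}$ is an isomorphism $\ker D_{\pm}(p)\to\ker D_{\pm}(u\cdot p)$ and likewise on cokernels, hence induces a canonical isomorphism of determinant lines $\det D_{\pm}(p)\to\det D_{\pm}(u\cdot p)$. Because this lift is given by honest scalar multiplication operators, the associativity relation $(uv)_{*}=u_{*}v_{*}$ holds identically, so we obtain a genuine (non-projective) $G$-equivariant structure on each ${\mathcal L}_i^{\pm}$; tensoring, ${\mathcal P}_{\psi_0}={\mathcal L}_1^{+}\otimes{\mathcal L}_1^{-}\otimes{\mathcal L}_2^{+}\otimes{\mathcal L}_2^{-}$ inherits one. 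Combined with freeness of the action, the quotient ${\mathcal P}_{\psi_0}/G\to{\mathcal C}/G$ is then a line bundle, and its restriction to ${\mathcal N}$ is the desired bundle, with fibre over $[p]$ canonically identified with the fibre of ${\mathcal P}_{\psi_0}$ at any representative $p$.

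The step I expect to carry the real content is the covariance identity together with the observation that $B_{\pm}$ is gauge invariant: it is precisely this invariance that prevents the added $B$-term from obstructing the conjugation of $\bar{\partial}+A^{(0,1)}$, so that $D_{\pm}$ still transforms like a twisted Cauchy--Riemann operator. In the nonabelian Quillen setting one would also have to worry about a projective anomaly in the gauge action on the determinant line; here, since the gauge group is abelian and acts by genuine scalar multiplication, the lift is strictly multiplicative and no such cocycle arises, so the only further points are the (standard, and here glossed as in the treatment of $\Omega$) analytic facts guaranteeing that the free gauge quotient is well behaved.
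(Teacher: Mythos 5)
Your proposal is correct and takes essentially the same route as the paper: the gauge invariance of $B_{\pm}$ makes each twisted Cauchy--Riemann operator transform by conjugation ($D_g = gDg^{-1}$ in the paper's notation), the map $s \mapsto gs$ identifies kernels, cokernels and eigenspaces, and the resulting identification of determinant lines along gauge orbits defines the bundle on ${\mathcal C}/G$, restricted to ${\mathcal N}$. The only difference is one of emphasis: the paper spells out the Quillen--Bismut--Freed spectral-cutoff construction (the spaces $K^a(\Delta)$ over the open sets $U^a$, $V^a$, needed because kernel dimensions jump) which you gloss as standard, while you are more explicit than the paper about freeness of the action on $\{\Psi \not\equiv 0\}$ and the strict, non-projective multiplicativity of the lift --- both points the paper leaves implicit.
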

\begin{proof}
First consider the Cauchy-Riemann operators 
$ D=  \frac{1}{\sqrt{4}} (\bar{\partial} +  A^{(0,1)} + B_{+}^{(0,1)})$. Under gauge transformation 
$D=[\frac{1}{\sqrt{4}} (\bar{\partial} + A^{(0,1)}  + B_{+}^{(0,1)})]
\rightarrow D_g= g[\frac{1}{\sqrt{4}} (\bar{\partial} +  A^{(0,1)} + B_{+}^{(0,1)})]g^{-1} $.
We can show that the operators $D$ and $D_g$ have isomorphic 
kernel and cokernel and their corresponding Laplacians have the 
same spectrum and the eigenspaces are of the same dimension. Let 
$\Delta$ denote the Laplacian corresponding to $D$ and $\Delta_g$ 
that corresponding to $D_g$. 
The Laplacian is $\Delta = \tilde{D} D$ where 
$\tilde{D} = [\frac{1}{\sqrt{4}} (\partial + A^{(1,0)} + B_{+}^{(1,0)})]$, where recall $\overline{A^{(1,0)}} = -A^{(0,1)}$ and $\overline{B_{+}^{(1,0)}} = - B_{+}^{(0,1)}$. Note that  
$\tilde{D} \rightarrow  \tilde{D}_g = g \tilde{D} g^{-1}$ under gauge transformation. Then $\Delta_g = g \Delta g^{-1}$. 
Thus the isomorphism of eigenspaces is  $s \rightarrow g s$. We describe here how to define the line bundle on the moduli space.
Let $K^a(\Delta)$ be the direct sum of 
eigenspaces of the operator $\Delta$ of 
eigenvalues $< a$, over the open subset 
$U^a = \{ \frac{1}{\sqrt{4}}(A^{(0,1)} + B_{+}^{(0,1)}) | a \notin {\rm Spec} \Delta \}$ of the affine 
space ${\mathcal J_{+}}.$ The determinant line bundle is defined using the exact sequence
$$ 0 \rightarrow {\rm Ker} D \rightarrow K^a(\Delta) \rightarrow 
D(K^a(\Delta)) \rightarrow {\rm Coker} D \rightarrow 0$$ 
Thus 
one identifies 

$\wedge^{{\rm top} }({\rm Ker} D)^* \otimes \wedge^{{\rm top} }
({\rm Coker} D)$ with 
 $\wedge^{{\rm top}}(K^a(\Delta))^* \otimes \wedge^{{\rm top}} 
(D(K^a(\Delta)))$  (see ~\cite{BF},  for more details) and 
there is an isomorphism of the fibers as $D \rightarrow D_g$. 
Thus one can identify 

$$ \wedge^{{\rm top}}(K^a(\Delta))^* \otimes \wedge^{{\rm top}} 
(D(K^{a}(\Delta))) \equiv
\wedge^{{\rm top}}(K^a(\Delta_g))^* \otimes \wedge^{{\rm top}} 
(D(K^{a}(\Delta_g))).$$
By extending this definition from 
$U^a$ to $V^a = \{(A, \Psi, \Phi)| a \notin {\rm Spec} \Delta \}$, 
an open subset of ${\mathcal C}$,  we can define the fiber over 
the quotient space $V^a/G$ to be the 
equivalence class of this fiber. Covering ${\mathcal C}$ with open sets of the 
type $V^a$, we can define it on ${\mathcal C}/G$. Then we can restrict it to
${\mathcal N} \subset {\mathcal C}/G$.

Similarly one can deal with the other cases of  
$[\frac{1}{\sqrt{4}} (\bar{\partial} + A^{(0,1)} \pm B_{\pm}^{(0,1)})]$.
For instance, let $([A], [\Psi], [\Phi]) \in {\mathcal C}/G,$ 
where $[A], [\Psi], [\Phi]$ are gauge equivalence classes of $A, \Psi, \Phi$, 
respectively.  Then associated to the equivalence class $([A], [\Psi], [\Phi])$ in the 
base space, there is an 
equivalence class of fibers coming from the identifications 
of ${\rm det} [\frac{1}{\sqrt{4}}(\bar{\partial} + A^{(0,1)} - B_{+}^{(0,1)})]$ with ${\rm det}[g(\frac{1}{\sqrt{2}}(\bar{\partial} + A^{(0,1)} - B_{+}^{(0,1)}))g^{-1}]$ as mentioned in the previous case. 

 This way one can prove that  ${\mathcal P}_{\Psi_0}$ is well defined 
on ${\mathcal C}/G$. Then we restrict it to 
${\mathcal N} \subset {\mathcal C}/G$.
\end{proof}

Next, in a similar way, we define two other determinant line bundles. Recall
$\overline{\Phi^{(1,0)}} = - \Phi^{(0,1)}.$ Let us denote by ${\mathcal M}_{\pm} = 
{\rm det} [ \frac{1}{\sqrt{2}} (\bar{\partial}+  A^{(0,1)}) \pm \Phi^{(0,1)}]$ a determinant bundle on ${\mathcal J}_{\pm} = \{\frac{1}{\sqrt{2}} (A^{(0,1)}) \pm 
 \Phi^{(0,1)} | A \in {\mathcal A}, \Phi \in {\mathcal H} \}$ which is isomorphic to 
${\mathcal A} \times {\mathcal H}$. We can extend it to ${\mathcal C} = 
{\mathcal A} \times \Gamma(M, L \oplus L) \times {\mathcal H}$ by defining the fibers  to be the same  for all $\Psi$. 
Thus ${\mathcal M} = {\mathcal M}_{+} \otimes {\mathcal M}_{-}$
well-defined line bundle on ${\mathcal C}$.

This can be defined exactly in a similar way to ${\mathcal P}_{\Psi_0}$ over the moduli space ${\mathcal N}$.

[Note: The square root of $2$ comes with the $\bar{\partial}+  A^{(0,1)}$-term alone.]

{\bf Curvature and symplectic form:}

Let $p = (A, \Psi, \Phi) \in S$. Let $X, Y \in T_{[p]}{\mathcal N}$. 
Since $T_{[p]}{\mathcal N}$ can be identified with a subspace in 
$T_p S$ orthogonal to $T_p O_p$, if we write 
$X =(\alpha_1, \beta, \gamma_1)$ and $Y=(\alpha_2, \eta, \gamma_2)$, 
(notation as before)
 then 
$X,Y$ can be said to satisfy a) $X, Y \in T_p S$ and b) 
$X, Y$ are ${\mathcal G}$-orthogonal to $T_p O_p $, the tangent space to the gauge orbit.  

Let ${\mathcal F}_{{\mathcal L}_{\pm}}$
 denote the Quillen curvatures of the four
determinant line bundles ${\mathcal L}_{1}^{\pm}$, ${\mathcal L}_2^{\pm}$,  
 respectively, which are determinants  of Cauchy-Riemann operators of the connections $  \frac{1}{\sqrt{4}}(A^{(0,1)}\pm B^{(0,1)}_{\pm}).$  In the curvature formula of Quillen the terms that will appear are 
$\frac{1}{\sqrt{4}}(\alpha_1 \pm b_{\pm})$ and $\frac{1}{\sqrt{4}}(\alpha_2 \pm c_{\pm})$ where 
$b_{\pm} = b_{\pm}^{(1,0)} + b_{\pm}^{(0,1)},$  $c_{\pm} = c_{\pm}^{(1,0)} + c_{\pm}^{(0,1)}$ such that
$$b^{(0,1)}_{\pm}  =  \pm \bar{\beta}_2 H \bar{\psi}_0 \bar{\theta} - \beta_1 H \bar{\psi}_0 \bar{\theta}  $$

$$b^{(1,0)}_{\pm} = \bar{\beta}_1 H \psi_0 \theta \mp \beta_2 H \psi_0 \theta $$

$$c^{(0,1)}_{\pm}  =  \pm \bar{\eta}_2 H \bar{\psi}_0 \bar{\theta} - \eta_1 H \bar{\psi}_0 \bar{\theta}     $$  
$$c^{(1,0)}_{\pm}  = \bar{\eta}_1 H \psi_0 \theta \mp \eta_2 H \psi_0 \theta  $$

\begin{eqnarray*}
{\mathcal F}_{{\mathcal L}^{\pm}_{1}} (X,Y)&=& 
-\frac{i}{2\pi}  \int_M \frac{1}{\sqrt{4}}(\alpha_1 \pm  b_{+}) \wedge \frac{1}{\sqrt{4}}(\alpha_2 \pm c_{+}) \\
&=& -\frac{i}{8\pi} \int_M [(\alpha_1 \wedge \alpha_2 ) \pm (b_{+} \wedge \alpha_2) \\
& & \pm (\alpha_1 \wedge c_{+}) + (b_{+} \wedge c_{+})]
\end{eqnarray*}

 \begin{eqnarray*}
{\mathcal F}_{{\mathcal L}^{\pm}_{2}} (X,Y)&=& 
-\frac{i}{2\pi}  \int_M \frac{1}{\sqrt{4}}(\alpha_1 \pm  b_{-}) \wedge \frac{1}{\sqrt{4}}(\alpha_2 \pm c_{-}) \\
&=& -\frac{i}{8\pi} \int_M [(\alpha_1 \wedge \alpha_2 ) \pm (b_{-} \wedge \alpha_2) \\
& & \pm (\alpha_1 \wedge c_{-}) + (b_{-} \wedge c_{-})]
\end{eqnarray*}

One can easily compute that 
\begin{eqnarray*}
{\mathcal F}_{{\mathcal P}_{\psi_0}} (X, Y) &=&
({\mathcal F}_{{\mathcal L}^{+}_{1}} +   {\mathcal F}_{{\mathcal L}^{-}_{1}}  + {\mathcal F}_{{\mathcal L}_{2}^{+}} +   {\mathcal F}_{{\mathcal L}_{2}^{-}}    )(X,Y) \\
&=& \frac{-i}{2 \pi} [ \int_M \alpha_1 \wedge \alpha_2 + \frac{1}{2} \int_M (b_{+} \wedge c_{-} + b_{-} \wedge c_{-})] \\
&=& \frac{-i}{2\pi} \int_M [(\alpha_1 \wedge \alpha_2 ) - i [(\beta_1 H \bar{\eta}_1 - \bar{\beta}_1 H \eta_1 ) -  (\beta_2 H \bar{\eta}_2 \\
& & - \bar{\beta}_2 H \eta_2 )] |\psi_0|_H^2 \omega ]
\end{eqnarray*}
after replacing $\theta \wedge \bar{\theta} = - i \omega$.

Let ${\mathcal F}_{{\mathcal M}_{\pm}}$ denote the  curvatures of ${\mathcal M}_{\pm}.$ Then, terms like $\frac{\alpha_i}{\sqrt{2}} \pm \gamma_i$ will appear in the Quillen curvature 
formula: 
 
\begin{eqnarray*}
{\mathcal F}_{{\mathcal M}_{\pm}} (X,Y)&=& 
\frac{-i}{2\pi}  \int_M [(\frac{\alpha_1}{\sqrt{2}} \pm  \gamma_1) \wedge (\frac{\alpha_2}{\sqrt{2}} \pm \gamma_2) ]  
\end{eqnarray*}

One can easily compute that 
\begin{eqnarray*}
{\mathcal F}_{{\mathcal M}}(X, Y)
&=&({\mathcal F}_{{\mathcal M}_{+}} + {\mathcal F}_{{\mathcal M}_{-}})(X,Y)  \\
&=&  \frac{-i}{2\pi} \int_M [ (\alpha_1 \wedge \alpha_2 ) + 2(\gamma_1 \wedge \gamma_2)] \omega 
\end{eqnarray*}

{\bf Holomorphicity} Since in  $A^{0,1} \pm B^{0,1}_{\pm},$ terms with 
$\psi$ and $\bar{\psi}_2$ comes, i.e. under the action of ${\mathcal I}$, $ \alpha_1^{0,1} \pm b_{\pm}^{0,1}$ goes to $i (  \alpha_1^{0,1} \pm b_{\pm}^{0,1})$, and $\alpha^{0,1} \pm \gamma^{0,1}$ goes to $i (\alpha^{0,1} \pm \gamma^{0,1})$ these line bundles are holomorphic.

Thus, we have proven the following theorem:
\begin{theorem}
$ {\mathcal Q}_{\Psi_0} = {\mathcal P}_{\psi_0}  \otimes {\mathcal M} $ is a well-defined
holomorphic line bundle on ${\mathcal N}$ whose Quillen curvature is
$\frac{i}{\pi}\Omega_{\Psi_0}$. Thus ${\mathcal Q}_{\Psi_0}$ is a 
prequantum bundle on ${\mathcal N}$.
\end{theorem}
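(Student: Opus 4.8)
The plan is to assemble the final theorem from the curvature computations already carried out for the constituent determinant line bundles. Since ${\mathcal Q}_{\Psi_0} = {\mathcal P}_{\psi_0} \otimes {\mathcal M}$ is a tensor product, its Quillen curvature is the sum of the curvatures of the factors, so first I would simply add the two computed expressions ${\mathcal F}_{{\mathcal P}_{\psi_0}}$ and ${\mathcal F}_{{\mathcal M}}$. From the displays above,
\begin{eqnarray*}
{\mathcal F}_{{\mathcal P}_{\psi_0}}(X,Y) &=& \frac{-i}{2\pi}\int_M \Big[\alpha_1\wedge\alpha_2 - i\big[(\beta_1 H\bar\eta_1 - \bar\beta_1 H\eta_1) \\
& & - (\beta_2 H\bar\eta_2 - \bar\beta_2 H\eta_2)\big]|\psi_0|_H^2\,\omega\Big]
\end{eqnarray*}
while ${\mathcal F}_{{\mathcal M}}$ contributes the $\gamma_1\wedge\gamma_2$ term together with an extra $\alpha_1\wedge\alpha_2$. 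I would then compare the sum term-by-term against the defining formula for $\Omega_{\psi_0}$, matching the connection piece $\int_M\alpha_1\wedge\alpha_2$, the spinor piece $\frac{i}{2}\int_M[(\beta_1 H\bar\eta_1 - \bar\beta_1 H\eta_1) - (\beta_2 H\bar\eta_2 - \bar\beta_2 H\eta_2)]|\psi_0|_H^2\,\omega$, and the Higgs piece $\int_M\gamma_1\wedge\gamma_2$. The goal is to verify that the overall scalar factors align so that ${\mathcal F}_{{\mathcal Q}_{\Psi_0}} = \frac{i}{\pi}\Omega_{\Psi_0}$.

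Next I would confirm that ${\mathcal Q}_{\Psi_0}$ is a genuine, well-defined line bundle on the moduli space, which follows by invoking the gauge-invariance argument already established in the Lemma for ${\mathcal P}_{\psi_0}$ (and its analogue for ${\mathcal M}$). The key input there is that $B_{\pm}$ and $\Phi^{(0,1)}$ are gauge invariant because $\psi_1,\bar\psi_2,\psi_0$ all transform by the same factor $u^{-1}$, so the shifted Cauchy--Riemann operators conjugate correctly under the gauge group and the determinant fibers descend to ${\mathcal C}/G$ and hence restrict to ${\mathcal N}$. I would cite this directly rather than re-deriving it.

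For holomorphicity I would appeal to the remark made just before the theorem: under the almost complex structure ${\mathcal I}$, each shifted $(0,1)$-form $\alpha_1^{0,1}\pm b_{\pm}^{0,1}$ and $\alpha^{0,1}\pm\gamma^{0,1}$ is multiplied by $i$, which is exactly the statement that these shifted connections vary holomorphically in the Kähler structure of the affine space of connections; the Quillen determinant of a holomorphic family is holomorphic, so the tensor product ${\mathcal Q}_{\Psi_0}$ is holomorphic. Finally, having matched curvature to $\frac{i}{\pi}\Omega_{\Psi_0}$ and knowing from the earlier theorem that $\Omega_{\Psi_0}$ is symplectic on ${\mathcal N}$, the definition of a prequantum bundle (a line bundle whose curvature is the symplectic form up to the standard $\frac{i}{\pi}$ normalization) is satisfied, completing the proof.

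The main obstacle I anticipate is purely bookkeeping: tracking the factors of $\frac{1}{\sqrt{4}}$ and $\frac{1}{\sqrt{2}}$ attached to the various Cauchy--Riemann operators, the signs coming from the four tensor factors of ${\mathcal P}_{\psi_0}$, and the cross-terms $b_{\pm}\wedge\alpha_2$, $\alpha_1\wedge c_{\pm}$ in the Quillen formula. These cross-terms must cancel in pairs when the four curvatures are summed (the $\pm$ structure is engineered precisely for this), leaving only the diagonal $\alpha_1\wedge\alpha_2$ and $b\wedge c$ contributions; verifying that cancellation and that the surviving $b_{\pm}\wedge c_{\pm}$ terms collapse to the clean spinor expression with $|\psi_0|_H^2\,\omega$ (after substituting $\theta\wedge\bar\theta = -i\omega$) is the one step where an error in signs or coefficients would break the exact identification with $\Omega_{\Psi_0}$.
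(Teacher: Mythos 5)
Your proposal is correct and follows essentially the same route as the paper: the paper states this theorem as the culmination of exactly the ingredients you cite --- the Lemma on gauge-invariant descent of the determinant bundles, the additivity of Quillen curvature under tensor product with the summed expressions for ${\mathcal F}_{{\mathcal P}_{\psi_0}}$ and ${\mathcal F}_{{\mathcal M}}$ (where the $\pm$ structure cancels the cross-terms $b_{\pm}\wedge\alpha_2$ and $\alpha_1\wedge c_{\pm}$, and the $\frac{1}{\sqrt{4}}$, $\frac{1}{\sqrt{2}}$ normalizations make the $\alpha_1\wedge\alpha_2$ and $\gamma_1\wedge\gamma_2$ coefficients come out right), and the holomorphicity remark about ${\mathcal I}$ acting by multiplication by $i$ on the shifted $(0,1)$-parts. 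Your bookkeeping concern is exactly where the paper's own displays contain minor typographical slips (e.g.\ $b_{+}\wedge c_{-}$ for $b_{+}\wedge c_{+}$, and a stray $\omega$ in ${\mathcal F}_{{\mathcal M}}$), but the intended computation sums to $\frac{i}{\pi}\Omega_{\psi_0}$ just as you outline.
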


{\bf Remark:}

As $\psi_0$ varies, the corresponding line bundles are  all topologically
equivalent since the curvature forms have to be of integral cohomology
and that would be constant. Thus they  have the same Chern class.
Holomorphically they may differ.

\section{Alternative method for the prequantization}

We  fix the gauge equivalence class of the connection 
 $A_0$ , i.e. $A_0$ is a $fixed$ connection which 
gauge transforms like $A$ when $\Psi$ gauge transforms.

 We define two determinant line bundles on the moduli space 
in the same way as before 
${\mathcal T}_{\pm}  = {\rm det} (\bar{\partial} + A_{0}^{0,1} + B^{0,1}_{\pm})$   on  ${\mathcal N} \subset {\mathcal C}/ G$.

Let ${\mathcal T} = {\mathcal T}_{+} \otimes {\mathcal T}_{-}$ 

Then ${\mathcal F}_{{\mathcal T}_{+}}(X, Y) = \frac{-i}{2 \pi} \int_M ( b_{+} \wedge c_{+} )$ and ${\mathcal F}_{{\mathcal T}_{-}}(X,Y) = \frac{-i}{2 \pi} \int_M ( b_{-} \wedge c_{-} ).$

Thus the curvature
\begin{eqnarray*}
{ \mathcal F}_{{\mathcal T}}(X,Y) &=& {\mathcal F}_{{\mathcal T}_{+}}(X,Y) + {\mathcal F}_{{\mathcal T}_{-}}(X,Y) \\
&=&  \frac{-i}{2 \pi} \int_M (b_{+} \wedge c_{+} + b_{-} \wedge c_{-}) \\
&=& \frac{-i}{2\pi} \int_M -2i [ (\beta_1 H \bar{\eta}_1 -  \bar{\beta}_1 H \eta_1)\\ 
& & -  (\beta_2 H \bar{\eta}_2 -  \bar{\beta}_2 H \eta_2)]|\psi_0|_H^2 \omega 
\end{eqnarray*}

Define 

${\mathcal S}_{\pm} = 
{\rm det} (\bar{\partial}+  A^{(0,1)} \pm \Phi^{(0,1)})$ a determinant bundle on ${\mathcal N}$.

Let ${\mathcal S} = {\mathcal S}^2_{+} \otimes {\mathcal S}^2_{-}$.

\begin{eqnarray*}
{\mathcal F}_{{\mathcal S}_{\pm}} (X,Y)&=& 
\frac{-i}{2\pi}  \int_M [(\alpha_1 \pm  \gamma_1) \wedge (\alpha_2 \pm \gamma_2) ]  
\end{eqnarray*}

One can easily compute that 
\begin{eqnarray*}
{\mathcal F}_{{\mathcal S}} (X,Y)
&=&2 ({\mathcal F}_{{\mathcal S}_{+}} + {\mathcal F}_{{\mathcal S}_{-}})(X,Y)  \\
&=&  \frac{-i}{2\pi} \int_M [ 4(\alpha_1 \wedge \alpha_2 ) + 4(\gamma_1 \wedge \gamma_2)] \omega 
\end{eqnarray*}

It is easy to calculate that 
${\mathcal D}_{\psi_0} = {\mathcal T} \otimes {\mathcal S}$ has curvature $\frac{2i}{\pi} \Omega_{\psi_0}$.

It is also a holomorphic line bundle.

Thus we have proved
\begin{theorem} 
${\mathcal D}_{\psi_0}$ is a holomorphic prequantum line bundle on ${\mathcal N}$ with 
curvature $ \frac{2i}{\pi} \Omega_{\psi_0}$.
\end{theorem}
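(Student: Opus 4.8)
The plan is to treat ${\mathcal D}_{\psi_0} = {\mathcal T} \otimes {\mathcal S}$ as an assembly of pieces whose analytic content has already been established, so that the proof reduces to three verifications: that the bundle descends to ${\mathcal N}$, that it is holomorphic, and that its Quillen curvature is exactly $\frac{2i}{\pi}\Omega_{\psi_0}$. The key structural fact I would invoke first is that, for line bundles carrying their canonical Quillen connections, the curvature of a tensor product is the sum of the curvatures; hence ${\mathcal F}_{{\mathcal D}} = {\mathcal F}_{{\mathcal T}} + {\mathcal F}_{{\mathcal S}}$, and the whole statement collapses to adding the two curvature forms already computed just above and matching the result against $\frac{2i}{\pi}\Omega_{\psi_0}$.

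For well-definedness on ${\mathcal N}$ I would repeat, verbatim in spirit, the descent argument of the Lemma establishing that ${\mathcal P}_{\psi_0}$ is well-defined. For ${\mathcal T}_{\pm} = {\rm det}(\bar{\partial} + A_0^{0,1} + B_{\pm}^{0,1})$ the one-forms $B_{\pm}$ are gauge invariant (since $\psi_1$, $\bar{\psi}_2$ and $\psi_0$ all transform the same way) and $A_0$ transforms like $A$, so under a gauge transformation the Cauchy--Riemann operator is conjugated by $g$; the induced isomorphism of kernels, cokernels and $\Delta$-eigenspaces then identifies the fibres exactly as in that Lemma. For ${\mathcal S}_{\pm} = {\rm det}(\bar{\partial} + A^{0,1} \pm \Phi^{0,1})$ the Higgs field $\Phi$ is gauge invariant while $A^{0,1}\mapsto A^{0,1} + u^{-1}\bar{\partial}u$, so the operator is again $g$-conjugated and the same identification applies. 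Holomorphicity is inherited factor by factor: ${\mathcal D}$ is a tensor product of determinant bundles of Cauchy--Riemann operators whose $(0,1)$-variations are $\alpha_1^{0,1}\pm b_{\pm}^{0,1}$ and $\alpha^{0,1}\pm\gamma^{0,1}$, and the Holomorphicity paragraph already shows that ${\mathcal I}$ acts on these as multiplication by $i$; each factor is therefore holomorphic with respect to the descended almost complex structure ${\mathcal I}$, and so is their tensor product.

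The curvature match is pure bookkeeping of the factors of $i$. From ${\mathcal F}_{{\mathcal T}}$ the prefactor is $\frac{-i}{2\pi}\cdot(-2i) = \frac{-1}{\pi}$ multiplying the spinor integrand $[(\beta_1 H \bar{\eta}_1 - \bar{\beta}_1 H \eta_1) - (\beta_2 H \bar{\eta}_2 - \bar{\beta}_2 H \eta_2)]|\psi_0|_H^2\omega$, while from ${\mathcal F}_{{\mathcal S}}$ (the squares producing the factor $4$) the prefactor is $\frac{-4i}{2\pi} = \frac{-2i}{\pi}$ multiplying $\int_M(\alpha_1\wedge\alpha_2 + \gamma_1\wedge\gamma_2)$. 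Comparing with $\frac{2i}{\pi}\Omega_{\psi_0}$, the $\frac{2i}{\pi}$ contracts the $-\int\alpha_1\wedge\alpha_2 - \int\gamma_1\wedge\gamma_2$ terms of $\Omega_{\psi_0}$ to $\frac{-2i}{\pi}\int(\alpha_1\wedge\alpha_2+\gamma_1\wedge\gamma_2)$, matching ${\mathcal F}_{{\mathcal S}}$, while $\frac{2i}{\pi}\cdot\frac{i}{2} = \frac{-1}{\pi}$ reproduces the spinor term of ${\mathcal F}_{{\mathcal T}}$. Thus ${\mathcal F}_{{\mathcal D}} = \frac{2i}{\pi}\Omega_{\psi_0}$ term by term. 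Since $\Omega_{\psi_0}$ is symplectic on ${\mathcal N}$ by the earlier theorem, a holomorphic line bundle whose curvature is the constant multiple $\frac{2i}{\pi}\Omega_{\psi_0}$ is by definition a prequantum bundle, completing the argument.

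I expect the only genuinely delicate point to be the sign-and-$i$ bookkeeping that must land on precisely $\frac{2i}{\pi}\Omega_{\psi_0}$ rather than a spurious multiple, together with confirming that letting both $A$ and $\Phi$ vary in ${\mathcal S}_{\pm}$ while holding the fixed connection $A_0$ in ${\mathcal T}_{\pm}$ introduces no new degeneracy or ill-definedness on ${\mathcal N}$. Both issues are controlled by the gauge invariance of $B_{\pm}$ and $\Phi$ noted above and by the reality condition $\overline{B^{1,0}_{\pm}} = -B^{0,1}_{\pm}$, $\overline{\Phi^{1,0}} = -\Phi^{0,1}$, which guarantee that each Cauchy--Riemann operator has a genuine unitary Quillen connection; everything else is routine verification once these are in place.
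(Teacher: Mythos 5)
Your proposal is correct and follows essentially the same route as the paper, which simply sums the Quillen curvatures ${\mathcal F}_{\mathcal T}+{\mathcal F}_{\mathcal S}$ computed just above the theorem and invokes, by reference, the earlier descent lemma for ${\mathcal P}_{\psi_0}$ and the Holomorphicity paragraph. In fact you supply more detail than the paper's one-line ``it is easy to calculate'': your sign-and-$i$ bookkeeping ($\frac{-i}{2\pi}(-2i)=\frac{-1}{\pi}$ for the spinor term, $\frac{-4i}{2\pi}=\frac{-2i}{\pi}$ for the $\alpha$ and $\gamma$ terms) and your gauge-conjugation checks for ${\mathcal T}_{\pm}$ (with the fixed class $A_0$) and ${\mathcal S}_{\pm}$ (with $\Phi$ gauge invariant) are exactly the omitted verifications, correctly executed.
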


Harish Chandra Research Institute, Chhatnag, Jhusi, Allahabad, 211019, India.
email: rkmn@mri.ernet.in

\end{document}